\newtcolorbox[auto counter, number within=section]{myframebox}[1][]{
    colback=white,
    colframe=black!80,
    width=\linewidth,
    boxrule=0.5pt,
    arc=2mm,
    breakable,
    enhanced jigsaw,
    fonttitle=\bfseries,
    coltitle=black,
    title=Box \thetcbcounter,
    attach boxed title to top left={xshift=3mm, yshift*=-\tcboxedtitleheight/2},
    boxed title style={
        colback=white!80,
        colframe=black!80,
        boxrule=0.5pt,
        arc=2mm
    },
    #1
}
\newcolumntype{C}[1]{>{\centering\arraybackslash}p{#1}}
\begin{document}
\title{Universally Composable Termination Analysis of Tendermint}

%
%
%
%
%

\author{
Zhixin Dong \inst{1}\orcidID{0009-0003-2979-1125} \and 
Xian Xu \inst{1}\orcidID{0000-0001-9713-9751} \and 
Yuhang Zeng \inst{1}\orcidID{0009-0007-5114-1363} \and 
Mingchao Wan \inst{2} \and 
Chunmiao Li \inst{2}
}


\institute{
East China University of Science and Technology, Shanghai, China \\ 
\email{dongzhixin@mail.ecust.edu.cn} \\ 
\email{xuxian@ecust.edu.cn} \\ 
\email{y80240060@mail.ecust.edu.cn} 
\and
Beijing Academy of Blockchain and Edge Computing, Beijing, China (100080) \\
\email{chainmaker@baec.org.cn} \\
\email{chunmiaoli1993@gmail.com}
}

\maketitle              
\begin{abstract}

Modern blockchain systems operating in adversarial environments require robust consensus protocols that guarantee both safety and termination under network delay attacks. Tendermint, a widely adopted consensus protocol in consortium blockchains, achieves high throughput and finality. However, previous analysis of the safety and termination has been done in a standalone fashion, with no consideration of the composition with other protocols interacting with it in a concurrent manner. 
Moreover, the termination properties under adaptive network delays caused by Byzantine adversaries have not been formally analyzed.  This paper presents the first universally composable (UC) security analysis of Tendermint, demonstrating its resilience against strategic message-delay attacks. By constructing a UC ideal model of Tendermint, we formalize its core mechanisms: phase-base consensus procedure, dynamic timeouts, proposal locking, leader rotation, and others, under a network adversary that selectively delays protocol messages.
Our main result proves that the Tendermint protocol UC-realizes the ideal Tendermint model, which ensures bounded termination latency, i.e., guaranteed termination, even when up to $f<n/3$ nodes are Byzantine (where $n$ is the number of nodes participating in the consensus), provided that network delays remain within a protocol-defined threshold under the partially synchronous net assumption. 
Specifically, through formal proofs within the UC framework, we show that Tendermint maintains safety and termination. By the composition theorem of UC, this guarantees that these properties are maintained when Tendermint is composed with various blockchain components. 

\keywords{Universally Composable Security \and Tendermint \and Byzantine Fault Tolerance.}
\end{abstract}
\section{Introduction}\label{sec:introduction}

Modern blockchain systems operate in decentralized, adversarial environments that pose fundamental challenges to consensus, where network delays and malicious actors jeopardize protocol correctness and progress. Byzantine Fault-Tolerant (BFT) protocols emerged from Lamport et al.'s formulation of the Byzantine Generals Problem \cite{lamport1982byzantine}, which originally assumed a fully synchronous network model with known message delay bounds. To better reflect real-world conditions, Practical Byzantine Fault Tolerance (PBFT) \cite{castro1999practical} adopts the partially synchronous network model introduced by Dwork et al. \cite{dwork1988consensus}, which presents an unknown Global Stabilization Time (GST) after which all honest parties receive messages within a bounded delay. This model relaxes synchrony assumptions while preserving deterministic guarantees, enabling key properties: security ensures that all honest parties agree on a single value in each consensus round, preventing forks and conflicting decisions; termination guarantees that every honest party reaches a final decision within a finite time bound. However, large-scale blockchain deployments, with thousands of geographically dispersed nodes and unpredictable network conditions, expose limitations in BFT designs. These protocols find it difficult to satisfy the global bounded-delay assumption in highly dynamic networks, which may cause termination to fail. In particular, adversaries can launch network-delay attacks, potentially causing forks, transaction censorship, or complete termination breakdown.

Over the past decades, researchers have developed various BFT protocols to enhance performance and address these challenges. PBFT~\cite{castro1999practical} employs a three-phase (Pre-Prepare, Prepare, Commit) broadcast mechanism to achieve consensus with $O(n^2)$ message complexity, where $n$ denotes the total number of replicas (i.e., distributed participants or nodes maintaining system state copies and engaging in consensus). It guarantees correctness provided the number of Byzantine faults $f$ satisfies $f \le n/3$, assuming that after Global Stabilization Time (GST), messages between honest replicas incur at most a known one-way network delay, denoted by $\Delta$. Under these conditions, the leader collects $2f + 1$ Prepare certificates to drive the system to commit. Zyzzyva~\cite{kotla2007zyzzyva} introduces speculative execution, allowing clients to optimistically accept replies if no conflicting certificates appear; its termination proof still relies on eventual synchrony to abort or commit speculative requests. MinBFT~\cite{veronese2009minimal} reduces communication complexity to $O(n)$ by leveraging trusted monotonic counters, yet its termination proof similarly assumes bounded message delays among honest replicas. Tendermint~\cite{kwon2014tendermint,buchman2016tendermint,buchman2018latest} structures voting into three phases (\text{PROPOSE}, \text{PREVOTE}, and \text{PRECOMMIT}) and employs a lock-based mechanism: once a block is locked, any subsequent conflicting proposal is rejected, ensuring eventual convergence after GST~\cite{milosevic2009unifying}. HotStuff~\cite{yin2019hotstuff} further refines this with pipelined commits and threshold signatures, proving that if three consecutive views proceed without view-changes under network delays below $\Delta$, any two committed blocks lie on the same branch. Overall, these protocols establish that under eventual synchrony, an honest leader gathers enough votes to progress within bounded time, ensuring termination once the network stabilizes.

Although Byzantine Fault Tolerant (BFT) protocols have advanced significantly, existing analyses remain protocol-centric and often overlook behavior under concurrent execution and adversarial delay attacks. This limits their applicability in real-world deployments where multiple protocols interact and selective delays occur. To address this gap, we adopt the Universal Composability (UC) framework~\cite{canetti2001universally}, which defines security via an indistinguishability experiment between a real world, where parties run the protocol under adversary $\mathcal{A}$ and environment $\mathcal{Z}$, and an ideal world, where parties interact solely with a trusted functionality $\mathcal{F}$ mediated by a simulator $\mathcal{S}$ that replicates adversarial effects. A protocol is UC-secure if no environment $\mathcal{Z}$ can distinguish these executions. This guarantee extends to arbitrary concurrent composition, enabling modular construction of complex systems from UC-secure subprotocols without sacrificing termination.

UC has been successfully applied to diverse cryptographic primitives, including secure multi-party computation~\cite{canetti2002universally}, key exchange~\cite{canetti2002universallykeyexchange}, signatures~\cite{canetti2004universally}, commitments~\cite{canetti2001universallycommitments}, and zero-knowledge proofs~\cite{camenisch2011framework}, demonstrating its broad utility. However, to our knowledge, few works conduct UC-based analyses for BFT consensus protocols. In particular, no prior work rigorously models Tendermint within the UC framework, hindering evaluation of its security and termination when composed with other blockchain modules. This work addresses two key questions: (1) How to formally model a BFT protocol as a UC ideal functionality explicitly incorporating delay attacks? (2) What insights does a UC analysis provide on timeout strategies and network assumptions to ensure termination under adversarial delays? Using Tendermint as a case study, we establish a UC model and prove its UC security concerning termination.

\subsection{Related Work}\label{sec:relatedwork}

\subsubsection{BFT Protocol Improvements}

The study of Byzantine Fault Tolerance (BFT) begins with Lamport et al.'s foundational work, ``The Byzantine Generals Problem,'' which establishes the resilience bound of $n > 3f$ replicas under synchronous networks~\cite{lamport1982byzantine}. PBFT~\cite{castro1999practical} extends this to partially synchronous settings with unknown Global Stabilization Time (GST), introducing a three-phase protocol (Pre-Prepare, Prepare, Commit) with $O(n^2)$ message complexity. Subsequent protocols aim to reduce latency, communication overhead, and cryptographic costs while preserving security and termination under eventual synchrony. Zyzzyva~\cite{kotla2007zyzzyva} employs speculative execution to enable fast tentative commits with rollback security. MinBFT~\cite{veronese2009minimal} leverages trusted monotonic counters to simplify view changes and reduce protocol state. SBFT~\cite{gueta2019sbft} integrates threshold signatures, vote aggregation, and fast-path optimizations to achieve subquadratic complexity. Recent designs adapt to hybrid fault models and relaxed synchrony: XFT~\cite{liu2016xft} tolerates simultaneous crash and Byzantine faults under weak timing assumptions; LibraBFT~\cite{baudet2019state}, built on HotStuff, streamlines consensus with modular threshold signatures and pipelined view-changes; BEAT~\cite{duan2018beat} enhances throughput and resilience via cryptographic randomness in geo-distributed deployments.

Despite design diversity, these protocols share a common proof strategy: under eventual synchrony, once honest replicas’ messages are delivered within bounded delay, a designated leader collects $2f+1$ matching votes to finalize a decision. Invariants on locked or committed values typically ensure security to prevent conflicting commits, while termination relies on leader rotation and message delivery guarantees to ensure progress.

\subsubsection{Blockchain and Consortium-Chain Applications}

The rise of blockchain systems drives the adaptation of BFT protocols to large-scale, permissioned deployments spanning thousands of nodes. Tendermint~\cite{kwon2014tendermint,buchman2016tendermint,buchman2018latest} employs a gossip-based peer-to-peer layer combined with a Propose-Prevote-Precommit scheme, achieving linear communication complexity and lock-based security under partial synchrony. Hyperledger Fabric~\cite{androulaki2018hyperledger} builds on PBFT and Kafka-style ordering, decoupling consensus engines from smart contract execution to enhance modularity in enterprise chains. HotStuff’s design supports systems like Diem (formerly Libra)~\cite{yin2019hotstuff,baudet2019state}, using threshold-signature aggregation and pipelining to sustain high throughput across hundreds of validators. Complementary projects such as BFT-SMaRt~\cite{bessani2014state} and MirBFT~\cite{stathakopoulou2019mir} optimize performance under realistic workloads: the former offers a Java library with dynamic reconfiguration and batching, while the latter improves throughput via decoupled networking and formally verified pipelining.

\subsubsection{UC Security Analyses for Consensus}

Traditional BFT security and termination proofs analyze protocols in isolation, often overlooking challenges of concurrent execution and composability in complex blockchain systems. The Universal Composability (UC) framework~\cite{canetti2001universally} overcomes these limitations by modeling protocols as ideal functionalities and requiring real-world executions to be indistinguishable from their ideal counterparts for any environment and adversary. Its composability theorem guarantees that security properties hold when multiple protocols are composed. UC-based analyses have been widely applied to public-chain consensus: Shoup et al.~\cite{shoup2024theoretical} provide a complete UC proof for an asynchronous common subset protocol adapted to permissionless settings; Ciampi et al.~\cite{ciampi2024universal} formalize a transaction serialization mechanism ensuring order fairness in proof-of-work systems; and Badertscher et al.~\cite{badertscher2024bitcoin} develop a composable UC abstraction of the Bitcoin ledger for modular security reasoning. Despite these advances, a UC-based analysis of BFT-style protocols in consortium chains remains absent.

To our knowledge, no prior work fully formalizes a consortium-chain BFT protocol, such as Tendermint, as a UC ideal functionality capturing termination under adaptive network-delay attacks. Existing proofs for Tendermint primarily focus on isolated protocol behavior and do not account for concurrent interactions with transaction execution or network layers. This gap motivates our contribution: a UC modeling of Tendermint and a refined termination proof resilient to adaptively delayed, round-specific adversarial strategies.

\subsection{Our Contributions}\label{sec:contributions}

This work presents the first universally composable (UC) security proof for the termination guarantees of a BFT consensus protocol under adaptive network latency attacks. Using Tendermint as a case study, we formalize its execution within the UC framework, where the interfaces and behavioral constraints of all parties, environment, proposer, validators, and adversary, are rigorously defined. Our model captures key protocol mechanisms such as phase-based consensus (Propose, Prevote, Precommit), proposal-locking, rotating leadership, and timeout adaptation, while explicitly simulating delay attacks by the adversary. We construct a simulator that interacts with the ideal functionality $\mathcal{F}_{\mathsf{Tendermint}}$ on behalf of the adversary, and prove that the real-world protocol $\pi_{\mathsf{Tendermint}}$ is indistinguishable from the ideal execution to any environment.

Our main result shows that in a system with $n \ge 3f+1$ nodes and up to $f$ Byzantine faults, Tendermint guarantees termination by $T^\ast = \mathsf{GST} + O(f^2\Delta)$ for every block height $h$, once the network reaches partial synchrony after Global Stabilization Time (GST) with maximum message delay $\Delta$. We define an ideal functionality $\mathcal{F}_{\mathsf{Tendermint}}$ and prove that $\pi_{\mathsf{Tendermint}}$ UC-realizes it. This implies that by time $T^\ast$, every honest party decides a unique and consistent value $v_h$, maintaining termination even under adversarial delays. Overall, our work contributes a rigorous foundation for analyzing Byzantine consensus under timing attacks, and offers practical insights for securing blockchain deployments in latency-sensitive, heterogeneous environments.

\subsection{Organization} \label{sec:organization}
This paper comprises six sections. Following the current introduction, Section~\ref{sec:preliminaries} establishes foundational knowledge on the Universal Composability (UC) framework and the Tendermint protocol's five-stage voting mechanism (NewRound, Propose, Prevote, Precommit, Commit) with adaptive timeouts. Section~\ref{sec:model} formalizes the partially synchronous network and adversarial delay attack model, defining the ideal functionality $\mathcal{F}_{\mathsf{Tendermint}}$ via cryptographic primitives and time-aware controls. Section~\ref{sec:protocol} details the real-world protocol $\pi_{\mathsf{Tendermint}}$ implementation, demonstrating how it UC-realizes $\mathcal{F}_{\mathsf{Tendermint}}$. Section~\ref{sec:securityProof} provides formal proofs of bounded termination time while preserving termination under $f < n/3$ Byzantine nodes and delay attacks. Finally, Section~\ref{sec:conclusion} summarizes contributions and discusses future directions.

\section{Preliminaries}\label{sec:preliminaries}

\subsection{Universally Composable Security}
The Universally Composable (UC) framework, introduced by Canetti~\cite{canetti2001universally} in 2001, provides a rigorous, simulation-based methodology for defining and proving the security of cryptographic protocols. It ensures that a protocol retains its security properties even when composed with arbitrary other protocols running concurrently. In adversarial network environments, this framework is particularly valuable for modeling complex, interactive systems because it guarantees that modular protocol components preserve their security properties when integrated into larger systems.

\subsubsection{Interactive Turing Machines (ITMs)}

The Universal Composability framework models all protocol participants, adversaries, and ideal functionalities as \emph{Interactive Turing Machines} (ITMs), each instantiated as an \emph{Interactive Turing Machine Instance} (ITI) uniquely identified by $(\mathsf{pid},\mathsf{sid})$, where $\mathsf{pid}$ denotes the party’s logical identity and $\mathsf{sid}$ the protocol session. This pairing enables concurrent executions with session isolation and composability. Each ITI operates with multiple tapes: a read-only identity tape storing its code and identifiers; an output tape for sending messages (including recipient and payload); an input tape for received protocol messages; a subroutine-output tape for functionality responses; a backdoor tape through which the adversary injects messages; and an activation tape controlling when the ITI is scheduled to run. This tape-based structure precisely captures complex interactions and adversarial influences within the UC model.

\subsubsection{The Basic UC Framework}

In the basic UC framework, security is defined via a simulation-based paradigm rooted in computational indistinguishability. Each protocol participant, honest or adversarial, is modeled as a probabilistic polynomial-time interactive Turing machine (ITM). Honest parties follow the protocol $\pi$, while the adversary $\mathcal{A}$ may corrupt a subset of them, gaining full control over their internal state and communication. An external environment machine $\mathcal{E}$ provides inputs to honest parties, observes their outputs, and interacts freely with $\mathcal{A}$.

In the ideal world, honest parties relay their inputs to a trusted ideal functionality $\mathcal{F}$, which carries out the task faithfully (e.g., key exchange, fair computation). For every real-world adversary $\mathcal{A}$, there must exist a simulator $\mathcal{S}$ in the ideal world that, interacting with both $\mathcal{F}$ and $\mathcal{E}$, reproduces the view of the real execution. The protocol $\pi$ is deemed UC-secure if no PPT environment $\mathcal{E}$ can distinguish, with more than negligible advantage, between the real-world execution and the ideal one. Although the basic UC model restricts the environment $\mathcal{E}$ to a single protocol session, thus limiting its ability to capture shared setups like a common reference string (CRS) or public key infrastructure (PKI), it still provides strong security guarantees. 

\begin{definition}[UC-Emulation]
A protocol $\pi$ \emph{UC-emulates} another protocol $\phi$ if, for every probabilistic polynomial-time (PPT) adversary $\mathcal{A}$, there exists a PPT simulator $\mathcal{S}$ such that for all constrained environments $\mathcal{E}$, the following computational indistinguishability holds:
\[
\mathsf{EXEC}_{\pi, \mathcal{A}, \mathcal{E}} \approx \mathsf{EXEC}_{\phi, \mathcal{S}, \mathcal{E}},
\]
where $\approx$ denotes computational indistinguishability with respect to the security parameter.
\end{definition}

\subsubsection{The Generalized UC Framework}

The Generalized UC (GUC) framework extends the basic UC model to faithfully capture scenarios in which multiple protocol instances share a common global setup, such as a public key infrastructure (PKI), a common reference string (CRS), or a public ledger. This extension was first formalized by Canetti et al.~\cite{canetti2007universally} to model protocols that rely on shared services. Unlike the constrained environment of the basic model, where the environment may invoke only a single instance of the protocol under analysis, the GUC environment is \emph{unconstrained}. In practice, this means that the environment $\mathcal{E}$ can launch multiple, possibly concurrent, sessions of the protocol $\pi$ being studied, as well as any auxiliary protocols that depend on the same global state. To represent such shared services, GUC introduces \emph{global ideal functionalities} $\mathcal{F}^{\mathrm{g}}$, which exist independently of any individual session and can be accessed by all parties and adversarial components at any time. 

Formally, we say that a protocol $\pi$ \emph{GUC-emulates} another protocol $\phi$ if, for every probabilistic polynomial-time (PPT) adversary $\mathcal{A}$ operating in the real world, there exists a PPT simulator $\mathcal{S}$ in the ideal world such that no unconstrained environment $\mathcal{E}$ can distinguish between the real-world execution of $\pi$ with $\mathcal{A}$ and the ideal-world execution of $\phi$ with $\mathcal{S}$, even when both executions share the same global functionalities. Concretely, this indistinguishability requirement is stated as follows:

\begin{definition}[GUC-Emulation]
A protocol $\pi$ \emph{GUC-emulates} a protocol $\phi$ if, for every PPT adversary $\mathcal{A}$, there exists a PPT simulator $\mathcal{S}$ such that for all (unconstrained) environments $\mathcal{E}$,
\[
\mathsf{GEXEC}_{\pi, \mathcal{A}, \mathcal{E}} \approx \mathsf{GEXEC}_{\phi, \mathcal{S}, \mathcal{E}},
\]
where $\approx$ denotes computational indistinguishability with respect to the security parameter.
\end{definition}

Despite the greater power afforded to the environment in GUC, the composition theorem remains valid, namely, if a higher-level protocol $\rho$ ordinarily invokes an ideal functionality $\phi$ as a subroutine, and if $\pi$ GUC-emulates $\phi$ with respect to the same global functionalities, then substituting every invocation of $\phi$ in $\rho$ with $\pi$ yields a new protocol $\rho^{\phi\to\pi}$ that GUC-emulates $\rho$.  

\begin{figure}[!t]
  \centering
  \includegraphics[width=\linewidth,keepaspectratio]{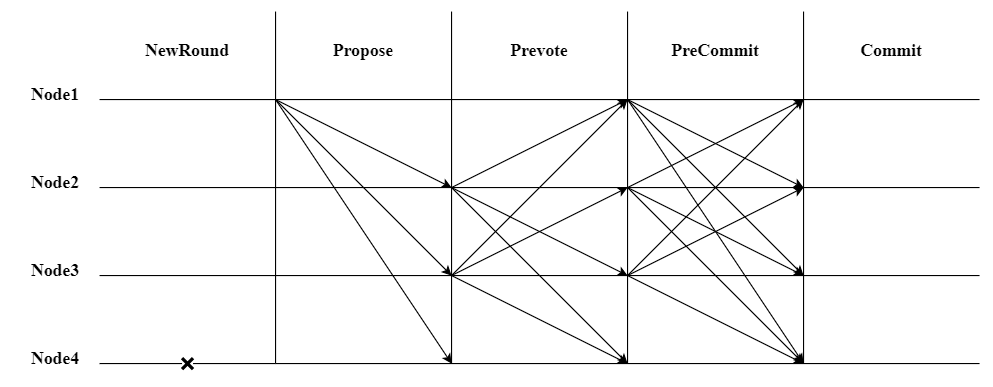}
  \caption{The five phases of Tendermint consensus.}
  \label{Tendermint-phase}
\end{figure}

\subsection{Tendermint Consensus}

The Tendermint consensus algorithm, used by the Chainmaker blockchain, is a Byzantine Fault Tolerant (BFT) protocol for consortium environments. It operates under the partial synchrony model~\cite{dwork1988consensus} with an unknown Global Stabilization Time, GST. The protocol can tolerate up to f Byzantine nodes, under the condition that f is less than one-third of the total nodes. It employs a five-phase, round-based protocol consisting of \emph{NewRound} for leader election via round-robin, Propose for the leader to broadcast a block proposal, \emph{Prevote} for validators to vote on the proposal's validity, \emph{PreCommit} which is triggered upon receiving a two-thirds quorum of votes, and \emph{Commit} for final block finalization. Termination is ensured through timeout-triggered round advancement, and the protocol achieves deterministic termination with a worst-case latency of $O(f^2\Delta)$ after GST, where $\Delta$ represents the bounded network delay. The design incorporates transaction-level validation and randomized transaction selection to provide fair leader rotation and deterministic execution, as illustrated in Fig.~\ref{Tendermint-phase}.

\section{System Model and Ideal Functionality}\label{sec:model}

\begin{figure}[!t]
  \centering
  \includegraphics[width=\linewidth,keepaspectratio]{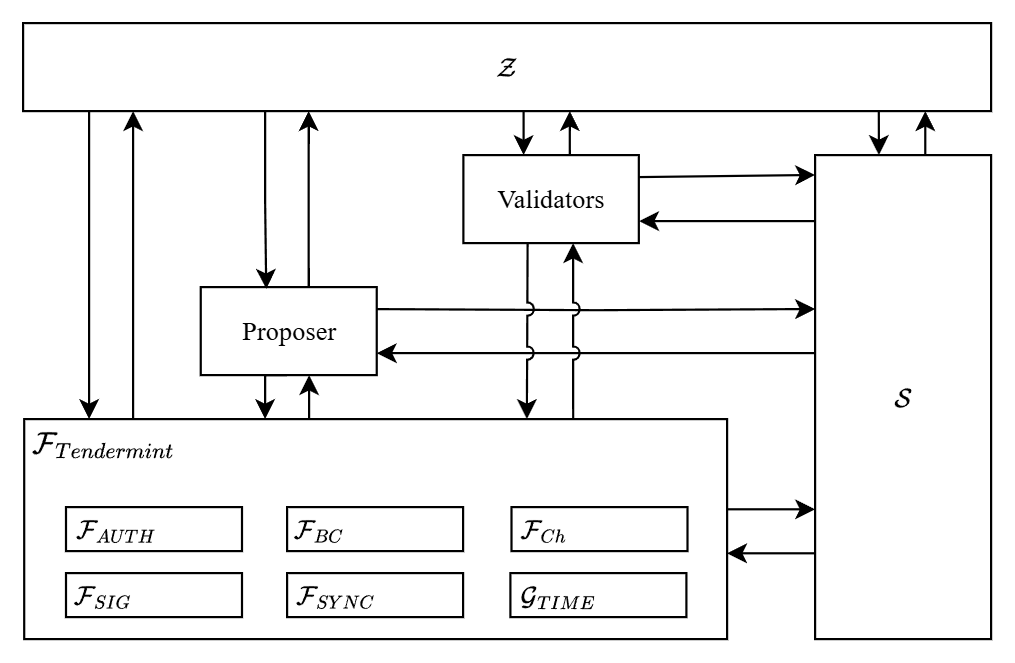}
  \caption{Interactions between Tendermint participants in the ideal-world execution, including their interfaces with the UC simulator.}
  \label{Tendermint-framework}
\end{figure}

\subsection{Tendermint Protocol Participant Roles}

In the Tendermint consensus protocol of Chainmaker, nodes are divided into two core roles: Proposer and Validator. These roles cooperate to ensure that the protocol safely terminates in Byzantine and delay-prone environments. The overall interaction between these roles and their corresponding interfaces with the ideal functionality $\mathcal{F}_{\mathsf{Tendermint}}$ in the UC framework is illustrated in the system architecture (see Fig.~\ref{Tendermint-framework}).

\subsubsection{Proposer and Validator}

At the start of each round, the proposer constructs and digitally signs a block proposal containing transactions, the previous block’s hash, and metadata, broadcasting it reliably with phase-specific timeouts that increase linearly per round. Validators participate in the \emph{Prevote}, \emph{PreCommit}, and \emph{Commit} phases, maintaining local state to guarantee termination under partial synchrony. During Prevote, validators verify block validity and broadcast either \text{PREVOTE}(id(v)), indicating a prevote in favor of the proposal $v$, or \text{PREVOTE}(\text{nil}), indicating a prevote for no proposal; upon receiving $2f+1$ matching prevotes, they lock the block by broadcasting \text{PRECOMMIT}(id(v)) and updating \text{lockedValue} and \text{lockedRound}, otherwise they precommit nil after timeout. Observing a quorum of valid prevotes sets \text{validValue} and \text{validRound} to prioritize future proposals. In the Commit phase, once $2f+1$ matching \text{COMMIT} messages are received, the block is finalized via $\mathcal{\rho}_{\scriptscriptstyle \mathsf{COMMIT}}$ and the height advances, with supermajority commits ensuring prompt finalization.

\begin{center}
\centering
\begin{myframebox}[title=Timer Functionality $\mathcal{G}_{\mathsf{TIME}}$]
The functionality $\mathcal{G}_{\mathsf{TIME}}$ models a logical countdown timer. It is parameterized over a global discrete time domain $T$, and maintains an internal table of timers indexed by session identifiers $\mathsf{sid}$. Each entry $t_{\mathsf{sid}}$ either stores a countdown integer or $\perp$. For all $\mathsf{sid}$, initialize $t_{\mathsf{sid}} := \perp$.
\begin{itemize}
    \item Upon receiving $ \langle \mathsf{GetTime}, \mathsf{sid} \rangle $ from a party $P$ or the environment, return the current value of $t_{\mathsf{sid}}$ to party $P$.

    \item Upon receiving $ \langle \mathsf{ResetTime}, \mathsf{sid} \rangle $ from a party $P$, set $t_{\mathsf{sid}} := \perp$ and return $\langle \mathsf{TimeOK}, \mathsf{sid} \rangle $ to party $P$.

    \item Upon receiving $ \langle \mathsf{TimeStart}, \mathsf{sid}, \mathsf{phase}_p, \delta \rangle $ from a party $P$: if $t_{\mathsf{sid}} \neq \perp$, ignore the request. 
    \item Otherwise: 
    \begin{enumerate}
        \item Set $t_{\mathsf{sid}} := \delta$, and return $ \langle \mathsf{TimeOK}, \mathsf{sid} \rangle $ to party $P$.
        \item From this point onward, in each global round, decrement $t_{\mathsf{sid}}$ by $1$ if $t_{\mathsf{sid}} \in \mathbb{N}$.
        \item When $t_{\mathsf{sid}} = 0$, send $ \langle \mathsf{TimeOver}, \mathsf{sid}, \mathsf{phase}_p, \delta \rangle $ to party $P$.
    \end{enumerate}
\end{itemize}
\end{myframebox}
\captionof{figure}{Timer Functionality $\mathcal{G}_{\mathsf{TIME}}$}
\label{fig:func-timer}
\end{center}

\subsection{The Timer Functionality $\mathcal{G}_{\mathsf{TIME}}$}

To ensure protocol termination under network delay attacks, the Tendermint protocol integrates a timeout mechanism formalized as the ideal functionality $\mathcal{G}_{\mathsf{TIME}}$ within the UC framework. This functionality models a logical timer enabling honest parties to detect and respond to extended inactivity across protocol phases, serving as a fundamental building block for progress guarantees under partial synchrony, where message delays are eventually bounded. $\mathcal{G}_{\mathsf{TIME}}$ maintains phase-specific timers for protocol sessions identified by unique session identifiers $\mathsf{sid}$, initially set to $\perp$. Honest parties interact via three interfaces: $\mathsf{GetTime}$ to query the current timer value, $\mathsf{ResetTime}$ to clear it, and $ \langle \mathsf{TimeStart,\, sid,\, phase_p,\, \delta} \rangle $ to initiate a countdown of $\delta$ time units. In compliance with UC scheduling semantics and upon expiration, the invoking party obtains a timeout notification $ \langle \mathsf{TimeOver,\, sid,\, phase_p,\, \delta} \rangle $. This countdown, which cannot be interrupted by an adversary, ensures predictable timeout behavior: when proposals or quorum votes are delayed beyond the timeout, the protocol reliably progresses to the next round, visible to honest parties while remaining concealed from the adversary. The formal specification is provided (see Fig.~\ref{fig:func-timer}), and $\mathcal{G}_{\mathsf{TIME}}$ satisfies UC simulatability requirements, underpinning Tendermint’s termination proof under universal composition.

\subsection{The Proposer Selection Functionality}

\begin{center}
\centering
\begin{myframebox}[title=Proposer Selection Functionality $\mathcal{F}_{\mathsf{GetProposer}}$]
{Description}: Maintains proposer rotation by tracking block $\mathsf{height}$ $h_p$, round number $\mathsf{round}_p$, and historical proposer ($\mathsf{preProposer}$), using the parameter $\mathsf{blocksPerProposer}$. It initializes with $\mathsf{preProposer} = \mathsf{nil}$, $\mathsf{proposerIndex} = 0$, $\mathsf{height} = 0$, and $\mathsf{round} = 0$, while loading $\mathsf{size}$ and $\mathsf{blocksPerProposer}$ from the chain configuration.

\textbf{GetProposer}
\item Upon receiving $\langle \mathsf{GetProposer}, \mathsf{preProposer}, \mathsf{sid}, \mathsf{pid}, h_p, \newline \mathsf{round}_p, |V| \rangle$ from party $P$:
\begin{enumerate}
    \item Set $\mathsf{height} \leftarrow h_p$, $\mathsf{round} \leftarrow \mathsf{round}_p$, $\mathsf{size} \leftarrow |V|$.
    \item Set $\mathsf{proposerOffset} \leftarrow \mathsf{GetIndexByString}(\mathsf{preProposer})$ if $\mathsf{preProposer} \neq \mathsf{nil}$, else $\mathsf{proposerOffset} \leftarrow 0$.
    \item If $\mathsf{height}$ \% $\mathsf{blocksPerProposer} = 0$, increment $\mathsf{proposerOffset}$.
    \item Set $\mathsf{roundOffset} \leftarrow \mathsf{round} \% \mathsf{size}$.
    \item Compute $\mathsf{currentIndex} \leftarrow (\mathsf{proposerOffset} + \mathsf{roundOffset}) \% \mathsf{size}$.
    \item Update $\mathsf{preProposer} \leftarrow \mathsf{GetByIndex}(\mathsf{currentIndex})$.
    \item Send $\langle \mathsf{Proposer}, \mathsf{preProposer}, \mathsf{sid}, \mathsf{pid}, \mathsf{height}, \mathsf{round} \rangle$ to $P$.
\end{enumerate}
\end{myframebox}
\captionof{figure}{Proposer Selection Functionality $\mathcal{F}_{\mathsf{GetProposer}}$}
\label{fig:func-getproposer}
\end{center}

The proposer selection functionality $\mathcal{F}_{\mathsf{GetProposer}}$ formalizes the mechanism of rotating block proposers in a predictable yet fair manner (see Fig.~\ref{fig:func-getproposer}). It maintains state variables for block height, round number, and the most recent proposer, while relying on system parameters such as the total number of validators ($\mathsf{size}$) and the rotation interval ($\mathsf{blocksPerProposer}$). Upon a request, the functionality computes the active proposer index by combining a proposer offset, derived from historical proposer information, with a round offset, determined by the modulo of the current round. This ensures that proposer roles advance deterministically across both block heights and consensus rounds, thereby preventing adversarial manipulation of leader election. 

\subsection{The Authentication Functionality}

\begin{center}
\centering
\begin{myframebox}[title=Authentication Functionality $\mathcal{F}_{\mathsf{AUTH}}$]
\begin{itemize}
    \item Upon receiving $\langle \mathsf{Send}, \mathsf{sid}, B, m \rangle$ from party $A$, send $\langle \mathsf{Sent}, \mathsf{sid}, A, B, m \rangle$ to $\mathcal{A}$.
    \item Upon receiving $\langle \mathsf{Send}, \mathsf{sid}, B', m' \rangle$ from the $\mathcal{A}$, do: If $A$ is corrupted: Output $\langle \mathsf{Sent}, \mathsf{sid}, A, m' \rangle$ to party $B'$. Else: Output $\langle \mathsf{Sent}, \mathsf{sid}, A, m \rangle$ to party $B$. Halt.
    \item Upon receiving $\langle \mathsf{Register}, \mathsf{sid}, A, pk \rangle$ from party $A$, send $\langle \mathsf{RegApply}, \mathsf{sid}, pk, A \rangle$ to $\mathcal{A}$.
    \item Upon receiving $\langle \mathsf{RegStatus}, \mathsf{sid}, A \rangle$ from the $\mathcal{A}$, do: If $\mathsf{RegStatus} == 1$: Send $\langle \mathsf{RegisterSuccess}, \mathsf{sid}, A \rangle$ to party $A$. Else: Output $\langle \mathsf{RegisterFailure}, \mathsf{sid}, A \rangle$ to party $A$. Halt.
    \item Upon receiving $\langle \mathsf{Lookup}, \mathsf{sid}, A, pk \rangle$ from party $A$, send $\langle \mathsf{LookApply}, \mathsf{sid}, pk, A \rangle$ to $\mathcal{A}$.
    \item Upon receiving $\langle \mathsf{LookStatus}, \mathsf{sid}, A \rangle$ from the $\mathcal{A}$, do: If $\mathsf{LookStatus} == 1$: Send $\langle \mathsf{LookupSuccess}, \mathsf{sid}, A \rangle$ to party $A$. Else: Output $\langle \mathsf{LookupFailure}, \mathsf{sid}, A \rangle$ to party $A$. Halt.
    \item Upon receiving $\langle \mathsf{Delete}, \mathsf{sid}, A, pk \rangle$ from party $A$, send $\langle \mathsf{DeleteApply}, \mathsf{sid}, pk, A \rangle$ to $\mathcal{A}$.
    \item Upon receiving $\langle \mathsf{DeleteStatus}, \mathsf{sid}, A \rangle$ from the $\mathcal{A}$, do: If $\mathsf{DeleteStatus} == 1$: Send $\langle \mathsf{DeleteSuccess}, \mathsf{sid}, A \rangle$ to party $A$. Else: Output $\langle \mathsf{DeleteFailure}, \mathsf{sid}, A \rangle$ to party $A$. Halt.
\end{itemize}
\end{myframebox}
\captionof{figure}{Authentication Functionality $\mathcal{F}_{\mathsf{AUTH}}$}
\label{fig:func-auth}
\end{center}

The Tendermint protocol relies on the ideal functionality $\mathcal{F}_{\mathsf{AUTH}}$ (see Fig.~\ref{fig:func-auth}) to guarantee the authenticity and integrity of communication. This functionality follows the universally composable framework introduced by Canetti~\cite{canetti2004universally}, which provides a formal foundation for secure authentication. $\mathcal{F}_{\mathsf{AUTH}}$ is responsible for handling message transmission as well as key management operations, including registration, lookup, and deletion. When a party $A$ initiates a message $\langle \mathsf{Send}, \mathsf{sid}, B, m \rangle$, the functionality ensures its delivery while preserving adversarial influence only under corruption conditions. Similarly, registration, lookup, and deletion requests are processed through $\mathcal{A}$, with the outcome explicitly communicated to the requesting party.

\subsection{The Signature Functionality}

\begin{center}
\centering
\begin{myframebox}[width = 1.02\linewidth,title=Signature Functionality $\mathcal{F}_{\mathsf{SIG}}$]

\textbf{Key Generation}
\begin{itemize}
\item Upon receiving a value $\langle \mathsf{KeyGen}, \mathsf{sid} \rangle$ from some party $\mathsf{S}$, verify that $\mathsf{sid} = \langle \mathsf{S}, \mathsf{sid}' \rangle$ for some $\mathsf{sid}'$. If not, then ignore the request. Else, hand $\langle \mathsf{KeyGen}, \mathsf{sid} \rangle$ to the adversary.  
\item Upon receiving $\langle \mathsf{Verification Key}, \mathsf{sid}, \mathsf{pid}, v \rangle$ from the adversary, output $\langle \mathsf{Verification Key}, \mathsf{sid}, \mathsf{pid}, v \rangle$ to $\mathsf{S}$, and record the pair $\langle \mathsf{S}, v \rangle$.  
\end{itemize}

\textbf{Signature Generation}
\begin{itemize}
\item Upon receiving a value $\langle \mathsf{Sign}, \mathsf{sid}, \mathsf{pid}, m \rangle$ from $\mathsf{S}$, verify that $\mathsf{sid} = \langle \mathsf{S}, \mathsf{sid}' \rangle$ for some $\mathsf{sid}'$. If not, then ignore the request. Else, send $\langle \mathsf{Sign}, \mathsf{sid}, \mathsf{pid}, m \rangle$ to the adversary.  
\item Upon receiving $\langle \mathsf{Signature}, \mathsf{sid}, \mathsf{pid}, m, \sigma \rangle$ from the adversary, verify that no entry $\langle m, \sigma, 0 \rangle$ is recorded. If it is, then output an error message to $\mathsf{S}$ and halt. Else, output $\langle \mathsf{Signature}, \mathsf{sid}, \mathsf{pid}, m, \sigma \rangle$ to $\mathsf{S}$, and record the entry $\langle m, \sigma, 1 \rangle$.  
\end{itemize}

\textbf{Signature Verification}
\begin{itemize}
\item Upon receiving a value $\langle \mathsf{Verify}, \mathsf{sid}, \mathsf{pid}, m, \sigma, v' \rangle$ from party $\mathsf{P}$, hand $\langle \mathsf{Verify}, \mathsf{sid}, \mathsf{pid}, m, \sigma, v' \rangle$ to the adversary.  
Upon receiving $\langle \mathsf{Verified}, \mathsf{sid}, \mathsf{pid}, m, \phi \rangle$ from the adversary, do: 
\begin{enumerate}
\item If $v' = v$ and the entry $\langle m, \sigma, 1 \rangle$ is recorded, then set $f = 1$. (This condition guarantees completeness: If the verification key $v'$ is the registered one and $\sigma$ is a legitimately generated signature for $m$, then the verification succeeds.)
\item Else, if $v' = v$, the signer is not corrupted, and no entry $\langle m, \sigma', 1 \rangle$ for any $\sigma'$ is recorded, then set $f = 0$. (This condition guarantees unforgeability: If $v'$ is the registered one, the signer is not corrupted, and never signed $m$, then the verification fails.)
\item Else, if there is an entry $\langle m, \sigma, f' \rangle$ recorded then let $f = f'$. (This condition guarantees consistency: All verification requests with identical parameters will result in the same answer.)
\item Else, $f = \phi$ and record the entry $\langle m, \sigma', \phi \rangle$.
\end{enumerate}
  Output $\langle \mathsf{Verified}, \mathsf{id}, m, f \rangle$ to $\mathsf{P}$.  
\end{itemize}
\end{myframebox}
\captionof{figure}{Signature Functionality $\mathcal{F}_{\mathsf{SIG}}$}
\label{fig:func-sig}
\end{center}

The signature functionality $\mathcal{F}_{\mathsf{SIG}}$~\cite{canetti2004universally} provides a formal abstraction for digital signature schemes in the UC framework, supporting key generation, signing, and verification (see Fig.~\ref{fig:func-sig}). It interacts with the adversary $\mathcal{A}$ in every stage, thereby modeling the influence of cryptographic delays and adversarial scheduling while preserving fundamental guarantees of authenticity and unforgeability. In particular, verification succeeds only if the verification key matches the registered one and the signature corresponds to a legitimately signed message, ensuring completeness. Conversely, if the signer is uncorrupted and no valid signature has been produced, verification must fail, ensuring unforgeability. Consistency across repeated verification attempts is also enforced, preventing contradictory outcomes. This explicit modeling of signature behavior reflects realistic security assumptions in adversarial environments while still enabling rigorous security proofs.

\subsection{The Broadcast Functionality}

To ensure secure communication under adversarial delay conditions, Tendermint adopts standard UC ideal functionalities. Specifically, the broadcast functionality $\mathcal{F}_{\mathsf{BC}}$ and the authenticated channel $\mathcal{F}_{\mathsf{Ch}}$ are incorporated directly from the formal models of Kiayias et al.~\cite{kiayias2022peredi}, originally designed for privacy-preserving and regulated CBDC infrastructures. The functionality $\mathcal{F}_{\mathsf{BC}}$ enables global dissemination of proposals and votes by accepting $\langle \mathsf{Broadcast}, \mathsf{sid}, m \rangle$ from a party $\mathsf{P}$ and delivering $\langle \mathsf{Broadcasted}, \mathsf{sid}, \mathsf{P}, m \rangle$ to all parties (see Fig.~\ref{fig:func-broadcast}). The formal specification thus provides a reliable abstraction for message diffusion, ensuring that protocol-level guarantees such as agreement and validity can be analyzed within the universal composition framework.

\begin{center}
\centering
\begin{myframebox}[title=Broadcast Functionality $\mathcal{F}_{\mathsf{BC}}$]
Broadcast functionality $\mathcal{F}_{\mathsf{BC}}$ parameterized by the set $\mathbb{M} = \{\mathsf{M}_1, \ldots, \mathsf{M}_D\}$ proceeds as follows:
\begin{itemize}
    \item Upon receiving $\langle \mathsf{Broadcast}, \mathsf{sid}, m \rangle$ from a party $\mathsf{P}$, send $\langle \mathsf{Broadcasted}, \mathsf{sid}, \mathsf{P}, m \rangle$ to all entities in the set $\mathbb{M}$ and to $\mathcal{A}$.
\end{itemize}
\end{myframebox}
\captionof{figure}{Broadcast Functionality $\mathcal{F}_{\mathsf{BC}}$} 
\label{fig:func-broadcast}
\end{center}

\subsection{The Enhanced Communication Channel Functionality}

The enhanced channel functionality $\mathcal{F}_{\mathsf{Ch}}$ ~\cite{kiayias2022peredi} supports flexible sender-receiver mappings with varying anonymity levels (see Fig.~\ref{fig:func-ch}). It generalizes authenticated and anonymous communication by parameterizing the information $\Delta$ revealed to the adversary. Examples include the sender-recipient anonymous channel $\mathcal{F}_{\mathsf{Ch}}^{\mathsf{sra}}$, the sender-sender anonymous channel $\mathcal{F}_{\mathsf{Ch}}^{\mathsf{ssa}}$, and the fully anonymous channel $\mathcal{F}_{\mathsf{Ch}}^{\mathsf{fa}}$, each of which leaks only the message length $\lvert m \rvert$ while hiding identities. By concealing metadata such as sender and receiver while ensuring delivery guarantees, these functionalities mitigate timing correlation and traffic analysis under adversarial delays. Moreover, the specification allows adversarial scheduling of message delivery through explicit acknowledgment, which captures the realistic constraints of synchronous communication under the UC framework. 

\begin{center}
\centering
\begin{myframebox}[title=Enhanced Communication Channel Functionality $\mathcal{F}_{\mathsf{Ch}}$]
Let $\mathcal{P}$ define a set of parties where $\mathcal{S}$ and $\mathcal{R}$ denote two parties of the set as the sender and receiver of a message $m$ respectively. $\Delta$ is defined as follows based on parameters of functionality. Message identifier $\mathsf{mid}$ is selected freshly by the functionality.
\begin{enumerate}
    \item Upon input $\langle \mathsf{Send}, \mathsf{sid}, \mathcal{R}, m \rangle$ from $\mathcal{S}$, output $\langle \mathsf{Send}, \mathsf{sid}, \Delta, \mathsf{mid} \rangle$ to $\mathcal{A}$.
    \item Upon receiving $\langle \mathsf{Ok}, \mathsf{sid}, \mathsf{mid} \rangle$ from $\mathcal{A}$, send $\langle \mathsf{Received}, \mathsf{sid}, \mathcal{S}, m \rangle$ to $\mathcal{R}$.
\end{enumerate}
Set $\Delta$ based on the following parameterized functions:
\begin{itemize}
    \item For $\mathcal{F}^{\mathsf{ac}}_{\mathsf{Ch}}$ set $\Delta = (\mathcal{S}, \mathcal{R}, m)$. Upon receiving $\langle \mathsf{Ok.Snd}, \mathsf{sid}, \mathsf{mid} \rangle$ from $\mathcal{A}$, send $\langle \mathsf{Continue}, \mathsf{sid} \rangle$ to $\mathcal{S}$.\footnote{This gives more power to adversary $\mathcal{A}$ who decides when the sender can proceed as sequential message sending is required in the UC model.}
    \item For $\mathcal{F}^{\mathsf{sra}}_{\mathsf{Ch}}$ set $\Delta = (\mathcal{S}, \lvert m \rvert)$.
    \item For $\mathcal{F}^{\mathsf{ssa}}_{\mathsf{Ch}}$ set $\Delta = (\mathcal{R}, \lvert m \rvert)$.
    \item For $\mathcal{F}^{\mathsf{fa}}_{\mathsf{Ch}}$ set $\Delta = \lvert m \rvert$.
    \item For $\mathcal{F}^{\mathsf{sc}}_{\mathsf{Ch}}$ set $\Delta = (\mathcal{S}, \mathcal{R}, \lvert m \rvert)$. Upon receiving $\langle \mathsf{Ok.Snd}, \mathsf{sid}, \mathsf{mid} \rangle$ from $\mathcal{A}$, send $\langle \mathsf{Continue}, \mathsf{sid} \rangle$ to $\mathcal{S}$.
    \item For $\mathcal{F}^{\mathsf{sa}}_{\mathsf{Ch}}$ set $\Delta = (\mathcal{R}, m)$.
\end{itemize}
\noindent Additional message handling rules:
\begin{enumerate}
    \item[1'.] Upon receiving $\langle \mathsf{Ok}, \mathsf{sid}, \mathsf{mid} \rangle$ from $\mathcal{A}$, send $\langle \mathsf{Received}, \mathsf{sid}, m, \mathsf{mid} \rangle$ to $\mathcal{R}$.
    \item[2'.] Upon receiving $\langle \mathsf{Send}, \mathsf{sid}, \mathsf{mid}, m' \rangle$ from $\mathcal{R}$, output $\langle \mathsf{Send}, \mathsf{sid}, \mathcal{R}, m', \mathsf{mid} \rangle$ to $\mathcal{A}$.
\end{enumerate}
\end{myframebox}
\captionof{figure}{Enhanced Communication Channel Functionality $\mathcal{F}_{\mathsf{Ch}}$} 
\label{fig:func-ch}
\end{center}

\subsection{The Synchronization Functionality}

The ideal synchronization functionality $\mathcal{F}_{\mathsf{SYNC}}$, originally introduced by Katz et al. \cite{achenbach2015synchronous} as $\mathcal{F}_{\mathsf{clock}}$, models a global synchronization service that provides parties with a consistent notion of round progression. In this paper, we refer to it as $\mathcal{F}_{\mathsf{SYNC}}$. Each party can signal the completion of its current round by sending a $\mathsf{RoundOK}$ message. Once all honest parties have signaled, the functionality resets the internal round indicators, thereby marking the transition to the next round. This ensures that progress is only made when all honest participants are synchronized. Furthermore, parties can issue a $\mathsf{RequestRound}$ query to check whether the round has advanced, allowing them to coordinate their actions consistently. At the same time, the adversary $\mathcal{A}$ is notified of each party’s $\mathsf{RoundOK}$ signal, reflecting that round synchronization events are observable in practice. The formal specification is provided (see Fig.~\ref{fig:func-sync}), and $\mathcal{F}_{\mathsf{SYNC}}$ satisfies UC simulatability requirements, underpinning the modeling of synchronous progress in consensus protocols under universal composition.

\begin{center}
\centering
\begin{myframebox}[title=Synchronization Functionality $\mathcal{F}_{\mathsf{SYNC}}$]
Initialize for each party $p_i$ a bit $d_i := 0$. 
\begin{itemize}
    \item Upon receiving message $\langle \mathsf{RoundOK} \rangle$ from party $p_i$ set $d_i = 1$. If for all honest parties $d_i = 1$, then reset all $d_i$ to 0. In any case, send $\langle \mathsf{switch, p_i} \rangle$ to $\mathcal{A}$.
    \item Upon receiving message $\langle \mathsf{RequestRound} \rangle$ from $p_i$, send $d_i$ to $p_i$.
\end{itemize}
\end{myframebox}
\captionof{figure}{Synchronization Functionality $\mathcal{F}_{\mathsf{SYNC}}$}
\label{fig:func-sync}
\end{center}

\subsection{The Write-Ahead Log Recovery Functionality}

The write-ahead log recovery functionality $\mathcal{F}_{\mathsf{ReplyWAL}}$ models the mechanism by which Tendermint nodes restore their consensus state after failures or restarts (see Fig.~\ref{fig:func-replywal}). It formalizes the sequence of operations required to guarantee consistency between persistent WAL data and the current blockchain height. The functionality first checks the system’s WAL mode, distinguishing between writing and non-writing configurations. It then attempts to construct an iterator over stored records; if this process fails, recovery is aborted. Once a valid iterator is obtained, the last entry is extracted and its height compared against the current chain height. This ensures that the system neither replays outdated states nor advances beyond what the chain already commits. When replaying, proposal entries are deserialized and processed, enabling the node to rejoin the consensus phase seamlessly. Finally, the functionality communicates success or error conditions back to the requesting party or environment, ensuring deterministic recovery outcomes. 

\begin{center}
\centering
\begin{myframebox}[title=Functionality $\mathcal{F}_{\mathsf{ReplyWAL}}$]

\textbf{ReplyWAL}
\\
Upon receiving $\langle \mathsf{ReplyWal}, \mathsf{sid}, \mathsf{pid}, \mathsf{phase}_p \rangle$ from party $\mathsf{P}$:

\begin{enumerate}
    \item {Initial Check}:
      \begin{itemize}
          \item Obtain the current chain height $\mathsf{currentHeight}$.
          \item If $\mathsf{walWriteMode}$ is NonWalWrite, send $\langle \mathsf{WalRestored}, \mathsf{sid}, \mathsf{pid}, \mathsf{no\_write\_mode}, \newline \mathsf{currentHeight} \rangle$ to $\mathsf{P}$ and return.
      \end{itemize}
    \item {Iterator Operations}:
      \begin{itemize}
          \item Create a WAL log iterator $\mathsf{iterator}$.
          \item If $\mathsf{iterator}$ creation fails, send $\langle \mathsf{WalRestored}, \mathsf{sid}, \mathsf{pid}, \mathsf{iterator\_fail}, \newline \mathsf{currentHeight} \rangle$ to $\mathsf{P}$ and return.
          \item Skip to the last record and check if there is a previous record.
          \item If there is a previous record, get the last record data $\mathsf{lastData}$ and deserialize it into $\mathsf{lastEntry}$.
      \end{itemize}
    \item {Height Consistency Verification}:
      \begin{itemize}
          \item Record the height $\mathsf{Height} = \mathsf{lastEntry.Height}$.
          \item If $\mathsf{currentHeight} + 1 < \mathsf{lastEntry.Height}$, send $\langle \mathsf{ERROR}, \mathsf{sid}, \mathsf{pid}, \mathsf{wal\_height\_inconsistent} \rangle$ to $\mathsf{P}$ and return.
          \item If $\mathsf{currentHeight} \geq \mathsf{lastEntry.Height}$, send $\langle \mathsf{WalRestored}, \mathsf{sid}, \mathsf{pid}, \mathsf{chain\_ahead}, \newline \mathsf{currentHeight} \rangle$ to $\mathsf{P}$ and return.
      \end{itemize}
    \item {Replay Execution}:
      \begin{itemize}
          \item Process the WAL record type based on $\mathsf{lastEntry.Type}$:
            \begin{itemize}
                \item If it is $\mathsf{PROPOSAL\_ENTRY}$:
                  \begin{itemize}
                      \item Deserialize the proposal and send $\langle \mathsf{EnterPrecommit}, \mathsf{sid}, \mathsf{pid}, \mathsf{proposal} \rangle$ to $\mathsf{P}$.
                      \item Otherwise, send $\langle \mathsf{ERROR}, \mathsf{sid}, \mathsf{pid}, \mathsf{replay\_fail} \rangle$ to $\mathsf{P}$ and return.
                  \end{itemize}
                \item Otherwise, log a warning and send $\langle \mathsf{ERROR}, \mathsf{sid}, \mathsf{pid}, \mathsf{invalid\_entry\_type} \rangle$ to $\mathsf{P}$.
            \end{itemize}
      \end{itemize}
    \item {Final Return}:
      \begin{itemize}
          \item Send $\langle \mathsf{WalRestored}, \mathsf{sid}, \mathsf{pid}, \mathsf{ReplySuccess}, \newline \mathsf{lastEntry.Height} \rangle$ to $\mathsf{P}$.
      \end{itemize}
\end{enumerate}

\textbf{SetMode}
\\
Upon receiving $\langle \mathsf{SetMod}, \mathsf{sid}, \mathsf{pid}, \mathsf{mode} \rangle$ from environment $\mathcal{E}$:

\begin{itemize}
    \item If $\mathsf{mode}$ is in $\{\mathsf{WalWrite}, \mathsf{NonWalWrite}\}$, set $\mathsf{walWriteMode}$ to mode and send $\langle \mathsf{ModeSet}, \mathsf{sid}, \mathsf{pid}, \mathsf{mode} \rangle$ to $\mathcal{E}$.
    \item Otherwise, send $\langle \mathsf{ERROR}, \mathsf{sid}, \mathsf{pid}, \mathsf{invalid\_mode} \rangle$ to $\mathcal{E}$.
\end{itemize}

\end{myframebox}
\captionof{figure}{Write-Ahead Log Recovery Functionality $\mathcal{F}_{\mathsf{ReplyWAL}}$}
\label{fig:func-replywal}
\end{center}

\subsection{The Functionality $\mathcal{F}_{\mathsf{Tendermint}}$}

The ideal functionality $\mathcal{F}_{\mathsf{Tendermint}}^{V,\Delta,\delta,\tau}[\mathcal{F}_{\mathsf{AUTH}}, \mathcal{F}_{\mathsf{BC}}, \mathcal{F}_{\mathsf{SIG}}, \mathcal{F}_{\mathsf{SYNC}}, \mathcal{F}_{\mathsf{TIME}}, \mathcal{F}_{\mathsf{ReplyWAL}}]$ formally specifies the termination properties of the Tendermint consensus protocol under network delay attacks within the Universally Composable (UC) framework. Parameterized by the validator set $V$, network one-way delay $\Delta$, normal execution time $\delta$, and adaptive per-phase timeouts $\tau$, it integrates timing and cryptographic primitives to orchestrate consensus via phased voting, locking, and adaptive timeouts, thereby ensuring termination despite adversarial delays. The functionality coordinates six auxiliary services—authenticated communication, reliable broadcast, signature verification, synchronization, timeout management, and write-ahead logging—which collectively enable secure messaging, adaptive leader rotation, cryptographic operations, and fault-tolerant state recovery. These services include authentication ($\mathcal{F}_{\mathsf{AUTH}}$) and signature ($\mathcal{F}_{\mathsf{SIG}}$)~\cite{canetti2004universally}, broadcast ($\mathcal{F}_{\mathsf{BC}}$) and enhanced channels ($\mathcal{F}_{\mathsf{Ch}}$)~\cite{kiayias2022peredi}, proposer selection ($\mathcal{F}_{\mathsf{GetProposer}}$), synchronization ($\mathcal{F}_{\mathsf{SYNC}}$)~\cite{achenbach2015synchronous}, and write-ahead logging ($\mathcal{F}_{\mathsf{ReplyWAL}}$).

\begin{table}[htbp]
  \caption{Parameters of $\mathcal{F}_{\mathsf{Tendermint}}$}
  \begin{center}
    \scriptsize
    \begin{tabular}{|l|l|}
      \hline
        \multicolumn{1}{|c|}{\textbf{Parameter}}
        & \multicolumn{1}{c|}{\textbf{Description}} \\
      \hline
      $\mathrm{V}$ & Validator set. \\ \hline
      $\Delta$ & Network one-way delay. \\ \hline
      $\mathcal{F}_{\mathsf{AUTH}}$ & Ideal functionality for authentication. \\ \hline
      $\mathcal{F}_{\mathsf{BC}}$ & Ideal functionality for broadcast. \\ \hline
      $\mathcal{F}_{\mathsf{SIG}}$ & Ideal functionality for signature. \\ \hline
      $\mathcal{F}_{\mathsf{SYNC}}$ & Ideal functionality for synchronization. \\ \hline
      $\mathcal{G}_{\mathsf{TIME}}$ & Ideal functionality for timing. \\ \hline
      $\mathcal{F}_{\mathsf{ReplyWAL}}$ & Ideal functionality for write‑ahead log. \\ \hline
    \end{tabular}
    \normalsize
    \label{tab:params}
  \end{center}
\end{table}

\begin{table}[htbp]
  \caption{Symbol Explanation of $\mathcal{F}_{\mathsf{Tendermint}}$}
  \begin{center}
    \scriptsize
    \begin{tabularx}{\columnwidth}{|l|X|}
      \hline
      \multicolumn{1}{|c|}{\textbf{Symbol}} &
      \multicolumn{1}{c|}{\textbf{Explanation}} \\
      \hline
      $|V|$ & Total number of validators. \\ \hline
      $\delta$ & Normal protocol execution time, provided by the configuration file. \\ \hline
      $\sigma$ & Adversary’s attack delay, specified by $\mathcal{S}$. \\ \hline
      $\tau$ & Phase timeout duration, provided by the configuration file. \\ \hline
      $B$ & Threshold for blocks per proposer. \\ \hline
      $h_{p}$ & Current consensus height of node $p$. \\ \hline
      $\mathrm{round}_{p}$ & Current round number of node $p$. \\ \hline
      $\mathrm{phase}_{p}$ & Current phase of node $p$ (propose, prevote, precommit, commit). \\ \hline
      $\mathrm{count}_{\mathrm{phase}_p}$ & Number of votes collected in the current phase. \\ \hline
      $\mathrm{lockedValue}_p$ & Value locked by node $p$ in the current round. \\ \hline
      $\mathrm{lockedRound}_p$ & Round in which node $p$’s value is locked. \\ \hline
      $\mathrm{validValue}_p$ & Valid value for node $p$ in the current round. \\ \hline
      $\mathrm{validRound}_p$ & Round in which node $p$’s value is valid. \\ \hline
      $\mathrm{decision}_p[]$ & Final consensus values for each height. \\ \hline
      $\mathrm{preProposer}$ & Index of the proposer in the previous round. \\ \hline
      $\mathrm{txID}_{\mathrm{random}}$ & Set of random transaction IDs. \\ \hline
      $\mathrm{count}_{\mathrm{random}}(\mathrm{txID})$ & Count of random-exclusion votes for a given transaction ID. \\ \hline
      $\ast$ & Arbitrary parameters. \\ \hline
    \end{tabularx}
    \normalsize
    \label{tab:symbols}
  \end{center}
\end{table}

To mitigate delay attacks, $\mathcal{F}_{\mathsf{Tendermint}}$ executes a five-phase pipeline—NewRound, Propose, Prevote, PreCommit, and Commit—while integrating dynamic timeouts $\tau^{r}_{\mathsf{phase}} = \tau^{\mathsf{init}}_{\mathsf{phase}} + r \cdot \tau^{\mathsf{step}}_{\mathsf{phase}}$. It begins with $\langle \text{NEWHEIGHT} \rangle$ and $\langle \text{NEWROUND} \rangle$ messages, signaling the start of a new cycle with WAL recovery. In the Propose phase, the designated proposer (via $\mathcal{F}_{\mathsf{GetProposer}}$) broadcasts its block only if $\delta + \sigma \leq \tau$. The Prevote and PreCommit phases involve validators hashing transactions, casting votes (including nil-votes), and locking values on a supermajority, advancing rounds based on timeout conditions. Finally, the Commit phase finalizes the block once more than $2f+1$ precommits are collected, after which $\mathcal{F}_{\mathsf{SYNC}}$ advances the consensus height.For each validator $p \in V$, the functionality tracks the current phase, height, round, vote counters, decision history, locked values, validation state, and proposer index, while also managing block thresholds, randomness, and WAL-based recovery. The parameters and symbols involved are listed in Table~\ref{tab:params} and Table~\ref{tab:symbols}, where their roles are explained in detail.

\begin{center}
\begin{myframebox}[title=The Functionality $\mathcal{F}_{\mathsf{Tendermint}}$]
\textbf{Corrupt}\\
Upon receiving a message $\langle \mathsf{Corrupt}, \mathsf{sid},\mathsf{pid},  \mathsf{Validator}\rangle$:
\begin{itemize}
    \item If $\mathsf{Validator}_{pid} \in \{Validator_{{pid}_1}, . . . , Validator_{{pid}_n}\}$, then record $Validator_{pid}$ as corrupted.
\end{itemize}
\textbf{NewHeight}\\
Upon receiving message $\langle \mathsf{NEWHEIGHT}, h_p, \mathsf{round}_p,  \ast \rangle$ from $\mathcal{S}$, while $\mathsf{phase}_p = \mathsf{propose}$:
\begin{itemize}
    \item $h_p = h_p +1$, $\mathsf{round}_p = 0$
    \item Send $\langle \mathsf{NEWROUND}, h_p, \mathsf{round}_p, \ast \rangle$ to $\mathcal{S}$.
\end{itemize}
\textbf{NewRound and Proposal}\\
Upon receiving message $\langle \mathsf{NEWROUND}, h_p, \mathsf{round}_p,  \ast \rangle$ from $\mathcal{S}$, while $\mathsf{phase}_p = \mathsf{propose}$:
\begin{itemize}
    \item If $\ast =\langle \mathsf{EnterPreCommit}, \mathsf{sid},\mathsf{pid}, \mathsf{PROPOSAL}_{wal} \rangle$,
    \begin{itemize}
        \item Broadcast $\langle \mathsf{PROPOSAL}_{wal}, h_p, \mathsf{round}_p, v_{wal} \rangle$.
        \item Set $\mathsf{phase}_p = \mathsf{precommit}$.
    \end{itemize}
    \item Otherwise: Send $\langle \mathsf{ReplyWal}, \mathsf{sid},\mathsf{pid}, \mathsf{phase}_p \rangle$ to $\mathcal{F}_{\mathsf{ReplyWAL}}$, and wait for the response from $\mathcal{F}_{\mathsf{ReplyWAL}}$.
    \item If the response is $\langle \mathsf{EnterPreCommit}, \mathsf{sid},\mathsf{pid}, \mathsf{PROPOSAL}_{wal} \rangle$, set $\ast \gets \langle \mathsf{EnterPreCommit}, \mathsf{sid},\mathsf{pid}, \mathsf{PROPOSAL}_{wal} \rangle$
    \item If the response is $\langle \mathsf{WalRestored}, \mathsf{sid},\mathsf{pid}, \mathsf{status}, \mathsf{WALHeight} \rangle$
    \begin{itemize}
        \item If $\mathsf{status} \in \{\mathsf{no\_write\_mode ,iterator\_fail,chain\_ahead}\}$, Send $\langle \mathsf{NEWHEIGHT}, \mathsf{WALHeight}, \mathsf{round}_p, \ast \rangle$ to $\mathcal{S}$.
        \item If $\mathsf{status} = \mathsf{ReplySuccess}$, Send $\langle \mathsf{NEWROUND}, \mathsf{lastEntry.Height}, 0, \ast \rangle$ to $\mathcal{S}$.
    \end{itemize}
    \item If the response is $\langle \mathsf{ERROR, sid, error\_status} \rangle$, Log the error and take appropriate action as needed.
    \item Otherwise: Send $\langle \mathsf{Sleep}, \mathsf{sid},\mathsf{pid}, \mathsf{phase}_p \rangle$ to $\mathcal{S}$ and wait for a response of the form $\langle \mathsf{Wake}, \mathsf{sid},\mathsf{pid}, \mathsf{phase}_p, \sigma \rangle$.
    \item If $\delta + \sigma > \tau^r_{Propose}$:
    \begin{itemize}
        \item Return to previous step.
    \end{itemize}
    \item Otherwise:
    \begin{itemize}
        \item Send $\langle \mathsf{TimeStart}, \mathsf{sid},\mathsf{pid}, \mathsf{phase}_p, \sigma \rangle$ to $\mathcal{G}_{\mathsf{TIME}}$, and suspend execution.
        \item Upon receiving $\langle \mathsf{TimeOver}, \mathsf{sid},\mathsf{pid}, \mathsf{phase}_p, \sigma \rangle$ from $\mathcal{G}_{\mathsf{TIME}}$, resume execution.
        \item Send $\langle \mathsf{TimeStart}, \mathsf{sid},\mathsf{pid}, \mathsf{phase}_p, \delta \rangle$ to $\mathcal{G}_{\mathsf{TIME}}$.  
        \item Send $\langle \mathsf{GetProposer}, \mathsf{preProposer}, \mathsf{sid},\mathsf{pid},h_p, \mathsf{round}_p ,|V|\rangle$ to $\mathcal{F}_{\mathsf{GetProposer}}$ and wait for a response of the form $\langle \mathsf{Proposer}, \mathsf{preProposer}, \mathsf{sid},\mathsf{pid},h_p, \mathsf{round}_p \rangle$.
        \item Send $\langle \mathsf{StartPropose}, \mathsf{sid},\mathsf{pid}, \mathsf{phase}_p \rangle$ to $\mathsf{Proposer}(h_p, \mathsf{round}_p)$ and wait for a response of the form $\langle \mathsf{PROPOSAL}, \mathsf{sid},\mathsf{pid}, \mathsf{phase}_p, v \rangle$.
        \item Send $\langle \mathsf{Verify}, \mathsf{sid},\mathsf{pid}, m(PROPOSAL, h_p, round_p, v), \mathsf{sig},\mathsf{pk} \rangle$ to $\mathcal{F}_{\mathsf{SIG}}$ to verify the message signature and obtain $\mathsf{result}_{sig}$.
        \item Send $\langle \mathsf{Verify}, \mathsf{sid},\mathsf{pid}, \mathsf{Proposer}(h_p, \mathsf{round}_p), v \rangle$ to $\mathcal{F}_{\mathsf{AUTH}}$ to verify the Proposer's identity and obtain $\mathsf{result}_{auth}$.
        \item If $\mathsf{result}_{sig} =0\lor \mathsf{result}_{auth} = 0$:
        \begin{itemize}
            \item Remove $\mathsf{Proposer}(h_p, \mathsf{round}_p)$ from Validator set.
            \item Send $\langle \mathsf{NEWROUND}, h_p, \mathsf{round}_p+1, \ast \rangle$ to $\mathcal{S}$.
        \end{itemize}
        \item If $\mathsf{Proposer}(h_p, \mathsf{round}_p)$ is corrupted,
        \begin{itemize}
            \item Send $\langle \mathsf{LeakValue}, \mathsf{sid},\mathsf{pid}, h_p, \mathsf{round}_p, v \rangle$ to $\mathcal{S}$.
        \end{itemize}
        \item Else if $\mathsf{valid}(v)$ and no $\langle \mathsf{TimeOver}, \mathsf{sid},\mathsf{pid}, \mathsf{phase}_p, \delta \rangle$ has been received from $\mathcal{G}_{\mathsf{TIME}}$:
        \begin{itemize}
            \item Broadcast $\langle \mathsf{PROPOSAL}, h_p, \mathsf{round}_p, v \rangle$.
            \item Update $\mathsf{phase}_p \gets \mathsf{prevote}$.
        \end{itemize}
        \item Otherwise:
        \begin{itemize}
            \item Send $\langle \mathsf{NEWROUND}, h_p, \mathsf{round}_p+1, \ast \rangle$ to $\mathcal{S}$.
        \end{itemize}
    \end{itemize}
\end{itemize}
\textbf{Prevote}\\
Upon receiving message $\langle \mathsf{PROPOSAL}, h_p, \mathsf{round}_p,  v,-1 \rangle$ from $\mathsf{Proposer}(h_p, \mathsf{round}_p)$, while $\mathsf{phase}_p = \mathsf{prevote}$:
\begin{itemize}
    \item Send $\langle \mathsf{Sleep}, \mathsf{sid},\mathsf{pid}, \mathsf{phase}_p \rangle$ to $\mathcal{S}$ and wait for a response of the form $\langle \mathsf{Wake}, \mathsf{sid},\mathsf{pid}, \mathsf{phase}_p, \sigma \rangle$.
    \item If $\delta + \sigma > \tau^r_{Prevote}$:
    \begin{itemize}
        \item Broadcast $\langle \mathsf{PREVOTE}, h_p, \mathsf{round}_p, \mathsf{nil} \rangle$.
    \end{itemize}
    \item Otherwise:
    \begin{itemize}
        \item Send $\langle \mathsf{TimeStart}, \mathsf{sid},\mathsf{pid}, \mathsf{phase}_p, \sigma \rangle$ to $\mathcal{G}_{\mathsf{TIME}}$, and suspend execution.
        \item Upon receiving $\langle \mathsf{TimeOver}, \mathsf{sid},\mathsf{pid}, \mathsf{phase}_p, \sigma \rangle$ from $\mathcal{G}_{\mathsf{TIME}}$, resume execution.
        \item Send $\langle \mathsf{TimeStart}, \mathsf{sid},\mathsf{pid}, \mathsf{phase}_p, \delta \rangle$ to $\mathcal{G}_{\mathsf{TIME}}$.
        \item Send $\langle \mathsf{Verify}, \mathsf{sid},\mathsf{pid}, m( \mathsf{PROPOSAL}, h_p, \mathsf{round}_p, v ), \mathsf{sig} ,\mathsf{pk} \rangle$ to $\mathcal{F}_{\mathsf{SIG}}$ to verify the message signature and obtain $\mathsf{result}_{sig}$.
        \item Send $\langle \mathsf{Verify}, \mathsf{sid},\mathsf{pid}, \mathsf{Proposer}(h_p, \mathsf{round}_p), v \rangle$ to $\mathcal{F}_{\mathsf{AUTH}}$ to verify the Proposer's identity and obtain $\mathsf{result}_{auth}$.
        \item If $\mathsf{result}_{sig} =0\lor \mathsf{result}_{auth} = 0$:
        \begin{itemize}
            \item Remove $\mathsf{Proposer}(h_p, \mathsf{round}_p)$ from Validator set.
        \end{itemize}
        \item Else if $\mathsf{valid}(v) \land (\mathsf{lockedRound}_p =-1 \lor \mathsf{lockedValue}_p = v)$ and no $\langle \mathsf{TimeOver}, \mathsf{sid},\mathsf{pid}, \mathsf{phase}_p, \delta \rangle$ has been received from $\mathcal{G}_{\mathsf{TIME}}$:
        \begin{itemize}
            \item Broadcast $\langle \mathsf{Execute}, v.\mathsf{Transactions} \rangle$ and wait for a response $\langle \mathsf{ReadWriteHash}, H_{\mathsf{exec}} \rangle$.
            \item If $H_{\mathsf{exec}} \neq v.H_{\mathsf{readWrite}}$:
            \begin{itemize}
                \item Broadcast $\langle \mathsf{IdentifyRandom}, v.\mathsf{Transactions} \rangle$ and receive $\mathsf{txID}_{\mathsf{random}}$.
                \item Broadcast $\langle \mathsf{PREVOTE}, h_p, \mathsf{round}_p, \mathsf{nil}, \mathsf{txID}_{\mathsf{random}} \rangle$.
            \end{itemize}
            \item Otherwise:
            \begin{itemize}
                \item Broadcast $\langle \mathsf{PREVOTE}, h_p, \mathsf{round}_p, \mathsf{id}(v) \rangle$.
            \end{itemize}
        \end{itemize}
        \item Otherwise:
        \begin{itemize}
            \item Broadcast $\langle \mathsf{PREVOTE}, h_p, \mathsf{round}_p, \mathsf{nil} \rangle$.
        \end{itemize}
    \end{itemize}
    \item Update $\mathsf{phase}_p \gets \mathsf{precommit}$.
\end{itemize}

Upon receiving message $\langle \mathsf{PROPOSAL}, h_p, \mathsf{round}_p, v, \mathsf{validRound}_p \rangle$ from $\mathsf{Proposer}(h_p, \mathsf{round}_p)$ AND $\mathsf{count}_{\mathsf{prevote}} > 2f+1$ while $\mathsf{phase}_p = \mathsf{propose} \land ( \mathsf{validRound}_p \geq 0 \land \mathsf{validRound}_p < \mathsf{round}_p)$:

\begin{itemize}
    \item Send $\langle \mathsf{Sleep}, \mathsf{sid},\mathsf{pid}, \mathsf{phase}_p \rangle$ to $\mathcal{S}$ and wait for a response of the form $\langle \mathsf{Wake}, \mathsf{sid},\mathsf{pid}, \mathsf{phase}_p, \sigma \rangle$.
    \item If $\delta + \sigma > \tau^r_{Prevote}$:
    \begin{itemize}
        \item Broadcast $\langle \mathsf{PREVOTE}, h_p, \mathsf{round}_p, \mathsf{nil} \rangle$.
    \end{itemize}
    \item Otherwise:
    \begin{itemize}
        \item Send $\langle \mathsf{TimeStart}, \mathsf{sid},\mathsf{pid}, \mathsf{phase}_p, \sigma \rangle$ to $\mathcal{G}_{\mathsf{TIME}}$, and suspend execution.
        \item Upon receiving $\langle \mathsf{TimeOver}, \mathsf{sid},\mathsf{pid}, \mathsf{phase}_p, \sigma \rangle$ from $\mathcal{G}_{\mathsf{TIME}}$, resume execution.
        \item Send $\langle \mathsf{TimeStart}, \mathsf{sid},\mathsf{pid}, \mathsf{phase}_p, \delta \rangle$ to $\mathcal{G}_{\mathsf{TIME}}$.
        \item Send $\langle \mathsf{Verify}, \mathsf{sid},\mathsf{pid}, m( \mathsf{PROPOSAL}, h_p, \mathsf{round}_p, v ), \mathsf{sig} ,\mathsf{pk} \rangle$ to $\mathcal{F}_{\mathsf{SIG}}$ to verify the message signature and obtain $\mathsf{result}_{sig}$.
        \item Send $\langle \mathsf{Verify}, \mathsf{sid},\mathsf{pid}, \mathsf{Proposer}(h_p, \mathsf{round}_p), v \rangle$ to $\mathcal{F}_{\mathsf{AUTH}}$ to verify the Proposer's identity and obtain $\mathsf{result}_{auth}$.
        \item If $\mathsf{result}_{sig} =0\lor \mathsf{result}_{auth} = 0$:
        \begin{itemize}
            \item Remove $\mathsf{Proposer}(h_p, \mathsf{round}_p)$ from Validator set.
        \end{itemize}
        \item Else if $\mathsf{valid}(v) \land (\mathsf{lockedRound}_p \leq \mathsf{validRound}_p \lor \mathsf{lockedValue}_p = v)$ and no $\langle \mathsf{TimeOver}, \mathsf{sid}, \mathsf{pid}, \mathsf{phase}_p, \delta \rangle$ has been received from $\mathcal{G}_{\mathsf{TIME}}$:
        \begin{itemize}
            \item Broadcast $\langle \mathsf{Execute}, v.\mathsf{Transactions} \rangle$ and wait for a response $\langle \mathsf{ReadWriteHash}, H_{\mathsf{exec}} \rangle$.
            \item If $H_{\mathsf{exec}} \neq v.H_{\mathsf{readWrite}}$:
            \begin{itemize}
                \item Broadcast $\langle \mathsf{IdentifyRandom}, v.\mathsf{Transactions} \rangle$ and receive $\mathsf{txID}_{\mathsf{random}}$.
                \item Broadcast $\langle \mathsf{PREVOTE}, h_p, \mathsf{round}_p, \mathsf{nil}, \mathsf{txID}_{\mathsf{random}} \rangle$.
            \end{itemize}
            \item Otherwise:
            \begin{itemize}
                \item Broadcast $\langle \mathsf{PREVOTE}, h_p, \mathsf{round}_p, \mathsf{id}(v) \rangle$.
            \end{itemize}
        \end{itemize}
        \item Otherwise:
        \begin{itemize}
            \item Broadcast $\langle \mathsf{PREVOTE}, h_p, \mathsf{round}_p, \mathsf{nil} \rangle$.
        \end{itemize}
    \end{itemize}
    \item Update $\mathsf{phase}_p \gets \mathsf{precommit}$.
\end{itemize}
Upon receiving message $\langle \mathsf{PREVOTE}, h_p, \mathsf{round}_p, \mathsf{nil},  \mathsf{txID}_{\mathsf{random}} \rangle$ from $\mathsf{Validator}(h_p, \mathsf{round}_p)$, while $\mathsf{phase}_p = \mathsf{prevote}$:
\begin{itemize}
    \item For each $\mathsf{txID} \in \mathsf{txID}_{\mathsf{random}}$, Set $\mathsf{count}_{\mathsf{random}}(\mathsf{txID}) \gets \mathsf{count}_{\mathsf{random}}(\mathsf{txID}) + 1$.
    \item If there exists a $\mathsf{txID}$ such that $\mathsf{count}_{\mathsf{random}}(\mathsf{txID}) \geq f + 1$:
    \begin{itemize}
        \item Broadcast $\langle \mathsf{RemoveTx}, \mathsf{txID} \rangle$ to remove the transaction from the transaction pool.
        \item Reset $\mathsf{count}_{\mathsf{random}}(\mathsf{txID}) \gets 0$.
    \end{itemize}
\end{itemize}
\textbf{PreCommit}\\
Upon receiving message $\langle \mathsf{PREVOTE}, h_p, \mathsf{round}_p,  \mathsf{id}(v) \rangle$ from $\mathsf{Validator}(h_p, \mathsf{round}_p)$, while $\mathsf{phase}_p = \mathsf{precommit}$:
\begin{itemize}
    \item Set $\mathsf{count}_{\mathsf{prevote}} \gets \mathsf{count}_{\mathsf{prevote}} + 1$.
    \item Send $\langle \mathsf{Sleep}, \mathsf{sid},\mathsf{pid}, \mathsf{phase}_p \rangle$ to $\mathcal{S}$ and wait for a response of the form $\langle \mathsf{Wake}, \mathsf{sid},\mathsf{pid}, \mathsf{phase}_p, \sigma \rangle$.
    \item If $\delta + \sigma > \tau^r_{PreCommit}$:
    \begin{itemize}
        \item Broadcast $\langle \mathsf{PRECOMMIT}, h_p, \mathsf{round}_p, \mathsf{nil} \rangle$.
    \end{itemize}
    \item Otherwise:
    \begin{itemize}
        \item Send $\langle \mathsf{TimeStart}, \mathsf{sid},\mathsf{pid}, \mathsf{phase}_p, \sigma \rangle$ to $\mathcal{G}_{\mathsf{TIME}}$, and suspend execution.
        \item Upon receiving $\langle \mathsf{TimeOver}, \mathsf{sid},\mathsf{pid}, \mathsf{phase}_p, \sigma \rangle$ from $\mathcal{G}_{\mathsf{TIME}}$, resume execution.
        \item Send $\langle \mathsf{TimeStart}, \mathsf{sid},\mathsf{pid}, \mathsf{phase}_p, \delta \rangle$ to $\mathcal{G}_{\mathsf{TIME}}$.
        \item Send $\langle \mathsf{Verify}, \mathsf{sid},\mathsf{pid}, m( \mathsf{PREVOTE}, h_p, \mathsf{round}_p, \mathsf{id}(v) ), \mathsf{sig} ,\mathsf{pk} \rangle$ to $\mathcal{F}_{\mathsf{SIG}}$ to verify the message signature and obtain $\mathsf{result}_{sig}$.
        \item Send $\langle \mathsf{Verify}, \mathsf{sid},\mathsf{pid}, \mathsf{Validator}(h_p, \mathsf{round}_p), v \rangle$ to $\mathcal{F}_{\mathsf{AUTH}}$ to verify the Proposer's identity and obtain $\mathsf{result}_{auth}$.
        \item If $\mathsf{result}_{sig} =0\lor \mathsf{result}_{auth} = 0$:
        \begin{itemize}
            \item Remove $\mathsf{Validator}(h_p, \mathsf{round}_p)$ from Validator set.
        \end{itemize}
        \item Else if $\mathsf{valid}(v) \land (\mathsf{count}_{\mathsf{prevote}} > 2f+1)$ and no $\langle \mathsf{TimeOver}, \mathsf{sid},\mathsf{pid}, \mathsf{phase}_p , \delta\rangle$ has been received from $\mathcal{G}_{\mathsf{TIME}}$:
        \begin{itemize}
            \item Set $\mathsf{lockedValue}_p \gets v$, $\mathsf{lockedRound}_p \gets \mathsf{round}_p$.
            \item Broadcast $\langle \mathsf{PRECOMMIT}, h_p, \mathsf{round}_p, \mathsf{id}(v) \rangle$.
        \end{itemize}
        \item Otherwise:
        \begin{itemize}
            \item Broadcast $\langle \mathsf{PRECOMMIT}, h_p, \mathsf{round}_p, \mathsf{nil} \rangle$.
        \end{itemize}
    \end{itemize}
    \item Update $\mathsf{phase}_p \gets \mathsf{commit}$.
\end{itemize}
\textbf{Commit} \\
Upon receiving message $\langle \mathsf{PRECOMMIT}, h_p, \mathsf{round}_p,  \mathsf{id}(v) \rangle$ from $\mathsf{Validator}(h_p, \mathsf{round}_p)$, while $\mathsf{phase}_p = \mathsf{commit}$:
\begin{itemize}
    \item Set $\mathsf{count}_{\mathsf{precommit}} \gets \mathsf{count}_{\mathsf{precommit}} + 1$.
    \item Send $\langle \mathsf{Sleep}, \mathsf{sid},\mathsf{pid}, \mathsf{phase}_p \rangle$ to $\mathcal{S}$ and wait for a response of the form $\langle \mathsf{Wake}, \mathsf{sid},\mathsf{pid}, \mathsf{phase}_p, \sigma \rangle$.
    \item If $\delta + \sigma > \tau^r_{Commit}$:
    \begin{itemize}
        \item Update $\mathsf{phase}_p \gets \mathsf{propose}$ and $\mathsf{round}_q \gets \mathsf{round}_p+1$.
        \item Set $\delta_{\mathsf{round}_{q}} = \delta_{\mathsf{round}_p} + \mathsf{round}_{q} \ast \Delta$
        \item Send $\langle \mathsf{NEWROUND}, h_p, \mathsf{round}_p, \ast \rangle$ to $\mathcal{S}$.
    \end{itemize}
    \item Otherwise:
    \begin{itemize}
        \item Send $\langle \mathsf{TimeStart}, \mathsf{sid},\mathsf{pid}, \mathsf{phase}_p, \sigma \rangle$ to $\mathcal{G}_{\mathsf{TIME}}$, and suspend execution.
        \item Send $\langle \mathsf{TimeStart}, \mathsf{sid},\mathsf{pid}, \mathsf{phase}_p, \delta \rangle$ to $\mathcal{G}_{\mathsf{TIME}}$.
        \item Upon receiving $\langle \mathsf{TimeOver}, \mathsf{sid},\mathsf{pid}, \mathsf{phase}_p, \sigma \rangle$ from $\mathcal{G}_{\mathsf{TIME}}$, resume execution.
        \item Send $\langle \mathsf{Verify}, \mathsf{sid},\mathsf{pid}, m( \mathsf{PRECOMMIT}, h_p, \mathsf{round}_p, \mathsf{id}(v) ), \mathsf{sig} ,\mathsf{pk} \rangle$ to $\mathcal{F}_{\mathsf{SIG}}$ to verify the message signature and obtain $\mathsf{result}_{sig}$.
        \item Send $\langle \mathsf{Verify}, \mathsf{sid},\mathsf{pid}, \mathsf{Validator}(h_p, \mathsf{round}_p), v \rangle$ to $\mathcal{F}_{\mathsf{AUTH}}$ to verify the Proposer's identity and obtain $\mathsf{result}_{auth}$.
        \item If $\mathsf{result}_{sig} =0\lor \mathsf{result}_{auth} = 0$:
        \begin{itemize}
            \item Remove $\mathsf{Validator}(h_p, \mathsf{round}_p)$ from Validator set.
        \end{itemize}
        \item If $\mathsf{valid}(v) \land (\mathsf{count}_{\mathsf{precommit}} > 2f+1)$ and no $\langle \mathsf{TimeOver}, \mathsf{sid},\mathsf{pid}, \mathsf{phase}_p , \delta\rangle$ has been received from $\mathcal{G}_{\mathsf{TIME}}$:
        \begin{itemize}
            \item Set $\mathsf{decision}_p[h_p] = v$.
            \item Send $\langle \mathsf{RoundOK}\rangle$ to $\mathcal{F}_{\mathsf{SYNC}}$.
            \item Send $\langle \mathsf{RequestRound}\rangle$ to $\mathcal{F}_{\mathsf{SYNC}}$, receive its response $d_i$:
            \begin{itemize}
                \item If $d_i=0$, Update $\mathsf{phase}_p \gets \mathsf{propose}$
				\item Reset $\mathsf{count}_{\mathsf{phase}_p},$ and $\mathsf{lockedRound}_p, \mathsf{lockedValue}_p, \\ \mathsf{validRound}_p, \mathsf{validValue}_p$.
				\item Then send $\langle \mathsf{NEWHEIGHT}, h_p, \mathsf{round}_p, \ast \rangle$ to $\mathcal{S}$.
        		\item Otherwise re-execute this step.
            \end{itemize}
        \end{itemize}
        \item Otherwise:
        \begin{itemize}
            \item Update $\mathsf{phase}_p \gets \mathsf{propose}$ and $\mathsf{round}_p \gets \mathsf{round}_p + 1$.
            \item Send $\langle \mathsf{NEWROUND}, h_p, \mathsf{round}_p, \ast \rangle$ to $\mathcal{S}$.
        \end{itemize}
    \end{itemize}
\end{itemize}
\textbf{Round Advance} \\
Upon receiving message $\langle\ast, h_p, \mathsf{round}, \ast, \ast\rangle$:
\begin{itemize}
    \item Set $\mathsf{count}_{\mathsf{nextround}} \gets \mathsf{count}_{\mathsf{nextround}} + 1$.
    \item If $(\mathsf{count}_{\mathsf{nextround}} > f+1) \land \mathsf{round} > \mathsf{round}_p$:
    \begin{itemize}
        \item Send $\langle \mathsf{NEWROUND}, h_p, \mathsf{round}, \ast \rangle$ to $\mathcal{S}$.
    \end{itemize}
\end{itemize}
\end{myframebox}
\vspace{1ex}
\captionof{figure}{The Functionality $\mathcal{F}_{\mathsf{Tendermint}}$}
\label{fig:func-tendermint}
\end{center}

\clearpage
\newpage

\section{UC-Based Protocol Construction}\label{sec:protocol}

\subsection{Formal Description of the Real-World Protocol}

\subsubsection{Modular Subroutines of Protocol $\pi$}

To simplify the formal representation and facilitate modular modeling, we extract and encapsulate multiple subroutines (or subprotocols) from the Tendermint protocol. Each of these subroutines provides operations that primarily involve the generation and forwarding of messages across different protocol phases, as well as the management of local consensus state. We use the notation $\mathcal{\rho}$ to represent a subroutine, so now $\mathcal{\pi}$ is composed of a main protocol and a set of such subprotocols. This abstraction fosters the development of modular composition and composable security analysis within the UC framework, with a particular emphasis on termination under network delay attacks. 
\begin{center}
\centering
\begin{myframebox}[title=Proposal Subroutine  $\mathcal{\rho}_{\scriptscriptstyle \mathsf{PROPOSAL}}$]
Initialization: $Proposal \gets \perp$, $Round \gets 0$.
\begin{itemize}
    \item Upon receiving a $\langle \mathsf{startProposal} \rangle$ message:
        \begin{itemize}
            \item Elect proposer via round-robin: $Proposer \in H$ where $H \subseteq V$ (honest validators).
            \item Initialize voting power: $votingPower_i \gets stake_i$, $\forall i \in \{1,\dots,N\}$.
            \item Update voting power:
                \begin{itemize}
                    \item Non-selected: $votingPower_i \gets votingPower_i + stake_i$.
                    \item Selected: $votingPower_i \gets votingPower_i - \sum_{j \neq i} stake_j$.
                \end{itemize}
            \item Advance round: $Round \gets Round + 1$.
        \end{itemize}
    \item Upon receiving a $\langle \mathsf{timeout}, T \rangle$ message from the adversary $\mathcal{A}$:
        \begin{itemize}
            \item If $T$ valid: $Round \gets Round + 1$ and elect new proposer.
        \end{itemize}
\end{itemize}
\end{myframebox}
\captionof{figure}{Proposal Subroutine  $\mathcal{\rho}_{\scriptscriptstyle \mathsf{PROPOSAL}}$}
\label{fig:func-proposal}
\end{center}

The proposal subroutine $\mathcal{\rho}_{\scriptscriptstyle \mathsf{PROPOSAL}}$, which governs proposer election and proposal dissemination (see Fig.~\ref{fig:func-proposal}), is initialized by setting the proposal to $\perp$ and the round number to 0. Upon receiving a $\mathsf{startProposal}$ message, proposer election occurs via round-robin over honest validators. Each validator’s voting power is initialized to its stake, i.e., for all $i \in \{1, \dots, N\}$, $votingPower_i \gets stake_i$. At the start of each new round, the voting power of the selected proposer decreases by the total stake of the other validators, while the voting power of non-proposers increases by their individual stakes. Upon receiving a $\langle \mathsf{timeout}, T \rangle$ message, if $T$ is valid, the round number is incremented by 1, and a new proposer is elected.

The voting subroutine $\mathcal{\rho}_{\scriptscriptstyle \mathsf{VOTE}}$, which models the \emph{Prevote} and \emph{PreCommit} phases (see Fig.~\ref{fig:func-vote}), initializes a timer (using $\mathcal{G}_{\mathsf{TIME}}$) and defaults to a $\mathsf{nil}$ vote upon timeout. Upon receiving a $\mathsf{Prevote}$ request, it queries the lock status from $\mathcal{\rho}_{\scriptscriptstyle \mathsf{STATE}}$; if the block $B'$ is locked, it broadcasts $\langle v_i, \mathsf{prevote}, \mathsf{Vote}(B') \rangle$. If no block is locked, it broadcasts $\langle v_i, \mathsf{prevote}, \mathsf{Vote}(B) \rangle$ for the current proposal, or $\langle v_i, \mathsf{prevote}, \mathsf{Vote}(\mathsf{nil}) \rangle$ if unavailable. In the \emph{PreCommit} phase, if more than $2f+1$ $\mathsf{prevote}$ messages for block $B$ are received, it broadcasts $\langle v_i, \mathsf{precommit}, \mathsf{Vote}(B) \rangle$, unlocks any  previous lock by sending $\langle v_i, \mathsf{unlock}, B' \rangle$ to $\mathcal{\rho}_{\scriptscriptstyle \mathsf{STATE}}$, and locks block $B$ by sending $\langle v_i, \mathsf{lock}, B \rangle$ to $\mathcal{\rho}_{\scriptscriptstyle \mathsf{STATE}}$. If \text{PREVOTE}(\text{nil}) exceeds the threshold, it broadcasts $\langle v_i, \mathsf{precommit}, \mathsf{Vote}(\mathsf{nil}) \rangle$ and releases all locks by sending $\langle v_i, \mathsf{unlock}, \mathsf{ALL} \rangle$ to $\mathcal{\rho}_{\scriptscriptstyle \mathsf{STATE}}$; otherwise, no lock operation is performed.

\begin{center}
\centering
\begin{myframebox}[title=Vote Subroutine  $\mathcal{\rho}_{\scriptscriptstyle \mathsf{VOTE}}$]
Initialization: Send $ \langle \mathsf{TimeStart}, \delta \rangle $ to $ \mathcal{G}_{\mathsf{TIME}} $. Upon any $ \langle \mathsf{TimeOver} \rangle $ from $ \mathcal{G}_{\mathsf{TIME}} $, vote nil block immediately.
\begin{itemize}
    \item Upon receiving a $ \langle \mathsf{Prevote}, \mathsf{Proposal} \rangle $ message from validator $ v_i \in V $:
        \begin{itemize}
            \item \textit{With Proposal}: 
                \begin{itemize}
                    \item Query $ \mathcal{\rho}_{\scriptscriptstyle \mathsf{STATE}} $ for PoLC.
                    \item If locked to previous $ B' $: broadcast $ \langle v_i, \mathsf{prevote}, \mathsf{Vote}(B') \rangle $.
                    \item Else: broadcast $ \langle v_i, \mathsf{prevote}, \mathsf{Vote}(B) \rangle $.
                \end{itemize}
            \item \textit{No Proposal}: Broadcast $ \langle v_i, \mathsf{prevote}, \mathsf{Vote}(\mathsf{nil}) \rangle $.
        \end{itemize}
    \item Upon receiving a $ \langle \mathsf{PreCommit}, \mathsf{Proposal} \rangle $ message from validator $ v_i \in V $:
        \begin{itemize}
            \item \textit{If $ \geq 2f+1 $ prevotes for $ B $}: 
                \begin{itemize}
                    \item Broadcast $ \langle v_i, \mathsf{precommit}, \mathsf{Vote}(B) \rangle $.
                    \item Send $ \langle v_i, \mathsf{unlock}, B' \rangle $ and $ \langle v_i, \mathsf{lock}, B \rangle $ to $ \mathcal{\rho}_{\scriptscriptstyle \mathsf{STATE}} $.
                \end{itemize}
            \item \textit{If $ \geq 2f+1 $ nil prevotes}: 
                \begin{itemize}
                    \item Broadcast $ \langle v_i, \mathsf{precommit}, \mathsf{Vote}(\mathsf{nil}) \rangle $.
                    \item Send $ \langle v_i, \mathsf{unlock}, \mathsf{ALL} \rangle $ to $ \mathcal{\rho}_{\scriptscriptstyle \mathsf{STATE}} $.
                \end{itemize}
            \item \textit{Otherwise}: no lock operation.
        \end{itemize}
\end{itemize}
\end{myframebox}
\captionof{figure}{Vote Subroutine  $\mathcal{\rho}_{\scriptscriptstyle \mathsf{VOTE}}$}
\label{fig:func-vote}
\end{center}

The commit subroutine $\mathcal{\rho}_{\scriptscriptstyle \mathsf{COMMIT}}$ (see Fig.~\ref{fig:func-commit}) , governs the block finalization process. Each validator maintains a local indicator $c_i$ to record its commitment status. When a $\mathsf{Commit}$ message is received, the subroutine first checks whether at least $2f+1$ $\mathsf{precommit}$ votes support the proposed block $B$. If this threshold is satisfied, the validator broadcasts $\langle v_i, \mathsf{commit}, \mathsf{Vote}(B) \rangle$ and begins collecting $\mathsf{commit}$ votes. Once $2f+1$ $\mathsf{commit}$ votes are gathered, it sets $c_i := 1$, signals successful finalization to the validator via $\mathsf{allowCommit}$, and instructs $\rho_{\scriptscriptstyle \mathsf{STATE}}$ to advance the blockchain height through a $\mathsf{newHeight}$ message. If the threshold is not met, the subroutine rejects the commitment and triggers $\mathsf{newRound}$ to continue the protocol. Additionally, upon receiving a $\mathsf{request_status}$ query, it returns both the current commitment status set $C$ and the finalization result of block $B$. This mechanism ensures that finalization only occurs under sufficient consensus, while providing transparency of status to all validators.

\begin{center}
\centering
\begin{myframebox}[title=Commit Subroutine  $\mathcal{\rho}_{\scriptscriptstyle \mathsf{COMMIT}}$]
Initialization: For each $v_i \in V$, initialize $c_i \gets 0$ (commit status indicator). Send $ \langle \mathsf{TimeStart}, \delta \rangle $ to $ \mathcal{G}_{\mathsf{TIME}} $. Upon any $ \langle \mathsf{TimeOver} \rangle $ from $ \mathcal{G}_{\mathsf{TIME}} $: send $ \langle \mathsf{newRound} \rangle $ to $ \rho_{\scriptscriptstyle \mathsf{STATE}} $.
\begin{itemize}
    \item Upon receiving a $ \langle \mathsf{Commit}, \mathsf{Proposal} \rangle $ message from validator $v_i \in V$:
        \begin{itemize}
            \item If $ \geq 2f+1 $ precommit votes received:
                \begin{itemize}
                    \item Broadcast $ \langle v_i, \mathsf{commit}, \mathsf{Vote}(B) \rangle $.
                    \item Collect network commit votes.
                    \item If $ \geq 2f+1 $ commit votes collected: set $c_i \gets 1$, and send $ \langle \mathsf{allowCommit}, \mathsf{Proposal} \rangle $ to $v_i$ and $ \langle \mathsf{newHeight} \rangle $ to $ \mathcal{\rho}_{\scriptscriptstyle \mathsf{STATE}} $.
                    \item Else: send $ \langle \mathsf{rejectCommit}, \mathsf{Proposal} \rangle $ to $v_i$ and $ \langle \mathsf{newRound} \rangle $ to $ \mathcal{\rho}_{\scriptscriptstyle \mathsf{STATE}} $.
                \end{itemize}
            \item Else: send $ \langle \mathsf{newRound} \rangle $ to $ \mathcal{\rho}_{\scriptscriptstyle \mathsf{STATE}} $.
        \end{itemize}
        
    \item Upon receiving a $ \langle \mathsf{request\_status} \rangle $ message from any party $v_k$:
        \begin{itemize}
            \item Return status set $C$ and finalization status of $B$.
        \end{itemize}
\end{itemize}
\end{myframebox}
\captionof{figure}{Commit Subroutine  $\mathcal{\rho}_{\scriptscriptstyle \mathsf{COMMIT}}$}
\label{fig:func-commit}
\end{center}

\begin{center}
\centering
\begin{myframebox}[title=State Subroutine $\mathcal{\rho}_{\scriptscriptstyle \mathsf{STATE}}$]
Initialization: $Height \gets 0$, $Round \gets 0$, $PoLC \gets \perp$.
\begin{itemize}
    \item Upon receiving a $ \langle \mathsf{newHeight} \rangle $ message from any validator $v_i \in V$:
        \begin{itemize}
            \item $Height \gets Height + 1$, $Round \gets 0$.
        \end{itemize}
    \item Upon receiving a $ \langle \mathsf{newRound} \rangle $ message from any validator $v_i \in V$:
        \begin{itemize}
            \item $Round \gets Round + 1$.
        \end{itemize}
    \item Upon receiving a $ \langle \mathsf{getProposal}, \mathsf{sid}, \mathsf{phase}_p, \ast \rangle $ message from the proposer:
        \begin{itemize}
            \item Retrieve proposals from the configuration file, and send $\langle \mathsf{proposalRec}, \mathsf{sid}, \mathsf{phase}_p, \mathsf{Proposals} \rangle$  to Proposer to the caller.
        \end{itemize}
    \item Upon receiving a $ \langle \mathsf{updateProposal}, \mathsf{sid}, \mathsf{phase}_p, \newline \mathsf{Proposals} \rangle $ message from the proposer:
        \begin{itemize}
            \item Update proposals in the configuration file.
        \end{itemize}
    \item Upon receiving a $ \langle v_i, \mathsf{lock}, B \rangle $ message from validator $v_i$:
        \begin{itemize}
            \item Add $v_i$ to $\langle \mathsf{Height}, \mathsf{Round}, B \rangle$ validator set in $PoLC$.
        \end{itemize}
    \item Upon receiving a $ \langle v_i, \mathsf{unlock}, B \rangle $ message from validator $v_i$:
        \begin{itemize}
            \item Remove $v_i$ from $\langle \mathsf{Height}, \mathsf{Round}, B \rangle$ validator set in $PoLC$.
        \end{itemize}
    \item Upon receiving a $ \langle v_i, \mathsf{unlock}, \mathsf{ALL} \rangle $ message from validator $v_i$:
        \begin{itemize}
            \item Reset $PoLC \gets \perp$.
        \end{itemize}
    \item Upon receiving a $ \langle v_i, \mathsf{queryState} \rangle $ message from validator $v_i$:
        \begin{itemize}
            \item Return current $PoLC$.
        \end{itemize}
\end{itemize}
\end{myframebox}
\captionof{figure}{State Subroutine $\mathcal{\rho}_{\scriptscriptstyle \mathsf{STATE}}$}
\label{fig:func-state}
\end{center}

The state management subroutine $\mathcal{\rho}_{\scriptscriptstyle \mathsf{STATE}}$ (see Fig.~\ref{fig:func-state}) maintains $\mathsf{Height}$, $\mathsf{Round}$, and a PoLC structure that logs, for each $\langle \mathsf{Height}, \mathsf{Round}, B \rangle$, any $\mathsf{PREVOTE}$ sets with size above two-thirds of the number of nodes, so as to track $\mathsf{lockedValue}$ / $\mathsf{lockedRound}$ for current locks and $\mathsf{validValue} / \mathsf{validRound}$ for the latest majority-supported proposal. Upon receiving a $\mathsf{newHeight}$ message from any validator $v_i \in V$, $\mathcal{\rho}_{\scriptscriptstyle \mathsf{STATE}}$ increments the height and resets the round to $0$. Upon receiving a $\mathsf{newRound}$ message, it increments the round. It also responds to proposer queries by retrieving proposals from the configuration file and returning them to the caller. When receiving a $\mathsf{updateProposal}$ message from the proposer, it updates the proposals in the configuration file. Upon receiving a $\mathsf{lock}$ message from a validator $v_i$, it adds $v_i$ to the $\langle \mathsf{Height}, \mathsf{Round}, B \rangle$ validator set in the PoLC structure. Similarly, receiving a $\mathsf{unlock}$ message removes $v_i$ from the set. If the $\mathsf{unlock}$ message indicates $\mathsf{ALL}$, it resets the PoLC structure. Additionally, it enables validators to query the current state of PoLC, thereby informing their voting decisions.


\newlength{\colone}\setlength{\colone}{0.5cm}
\newlength{\coltwo}\setlength{\coltwo}{4.5cm}
\newlength{\colthree}\setlength{\colthree}{5.5cm}
\newlength{\colfour}\setlength{\colfour}{0.5cm}
\newlength{\colspanwidth}
\setlength{\colspanwidth}{\dimexpr\colthree+\colfour\relax}

\begin{center}
\begin{myframebox}[title=The Protocol $\mathcal{\pi}_{\mathsf{Tendermint}}$]
\small 
\renewcommand{\arraystretch}{0.8} 
\setlength{\parskip}{-1pt} 
\tablefirsthead{%
  \hline
  \makebox[\colone][c]{$\mathcal{Z}$} &
  \makebox[\coltwo][c]{Proposer} &
  \makebox[\colthree][c]{Validator} &
  \makebox[\colfour][c]{$\mathcal{A}$} \\
  \hline\\[0.1pt]
}
\tablehead{\hline
  \makebox[\colone][c]{$\mathcal{Z}$} &
  \makebox[\coltwo][c]{Proposer} &
  \makebox[\colthree][c]{Validator} &
  \makebox[\colfour][c]{$\mathcal{A}$} \\
  \hline
}
\tabletail{\hline \multicolumn{4}{r}{Continued on next page} \\}
\tablelasttail{\hline}
\begin{supertabular}{%
    p{\colone}%
    p{\coltwo}%
    p{\colthree}%
    p{\colfour}%
}
\begin{tikzpicture}[baseline]
      \draw[->, thick] (0,0) -- (\linewidth,0);
\end{tikzpicture} & 1: Send $\langle \mathsf{TimeStart}$, $ \mathsf{sid}$, $\mathsf{phase}_p$, $ \delta \rangle$ to $\mathcal{G}_{\mathsf{TIME}}$ &  &  \\\\
& 2: Send \newline $\langle\mathsf{getProposal},  \mathsf{sid},  \mathsf{phase}_p, \ast \rangle$ to $\mathcal{\rho}_{\scriptscriptstyle \mathsf{STATE}}$  &  &  \\\\
& 3: Get $\langle \mathsf{proposalRec}, \mathsf{sid},  \mathsf{phase}_p, \newline  \mathsf{Proposals} \rangle$ from $\mathcal{\rho}_{\scriptscriptstyle \mathsf{STATE}}$ &  &  \\\\
& 4: Select a Proposal value $v$ from the Proposals. &  \\\\
& 5: Send \newline $\langle \mathsf{Verify}, \mathsf{sid},  m(\mathsf{PROPOSAL}, h_p, \newline \mathsf{round}_p, v), \mathsf{sig},\mathsf{pk} \rangle$ to $\mathcal{F}_{\mathsf{SIG}}$ &  &  \\\\
& 6: Send \newline $\langle \mathsf{Verify}, \mathsf{sid}, \mathsf{Proposer}(h_p, \mathsf{round}_p),   v \rangle$ to $\mathcal{F}_{\mathsf{AUTH}}$ &  &  \\\\
& 7: If $\mathsf{result}_{sig} =0 \lor \mathsf{result}_{auth} = 0$, remove $\mathsf{Proposer}(h_p, \mathsf{round}_p)$  \newline  and call $\mathcal{\rho}_{\scriptscriptstyle \mathsf{PROPOSAL}}$ &  &  \\\\
& 8: Send $\langle \mathsf{Sleep}, \mathsf{sid},  h_p, \mathsf{round}_p, v \rangle$ to $\mathcal{A}$
    & \multicolumn{2}{>{\centering\arraybackslash}p{\colspanwidth}}{%
        \begin{tikzpicture}[baseline]
          \draw[->, thick] (0,0) -- (\linewidth,0);
        \end{tikzpicture}%
      } \\\\
& 9: Get $\langle \mathsf{Wake}, \mathsf{sid}, h_p, \mathsf{round}_p, v \rangle$ from $\mathcal{A}$ 
    & \multicolumn{2}{>{\centering\arraybackslash}p{\colspanwidth}}{%
        \begin{tikzpicture}[baseline]
          \draw[<-, thick] (0,0) -- (\linewidth,0);
        \end{tikzpicture}%
      } \\\\
& 10: If $\mathsf{valid}(v)$ and get $\langle \mathsf{TimeOver},  \mathsf{sid}, \mathsf{phase}_p, \newline  \delta \rangle$, call $\mathcal{\rho}_{\scriptscriptstyle \mathsf{PROPOSAL}}$ &  &  \\\\
& 11: Otherwise, broadcast $\langle \mathsf{PROPOSAL}, h_p,   \mathsf{round}_p, \newline  v \rangle$ to Validator &  &  \\\\
&  & 12: Send $\langle \mathsf{TimeStart}, \mathsf{sid},  \mathsf{phase}_p,  \delta \rangle$ to $\mathcal{G}_{\mathsf{TIME}}$ &  \\\\
&  & 13: If $\mathsf{valid}(v)$, call $\mathcal{\rho}_{\scriptscriptstyle \mathsf{VOTE}}\langle \mathsf{Prevote}, \newline  \mathsf{PROPOSAL} \rangle$ &
\begin{tikzpicture}[baseline]
  \def\yup{4pt}
  \def\ydown{-4pt}
  \draw[->, thick] (0,\yup) -- (\linewidth,\yup);
  \draw[<-, thick] (0,\ydown) -- (\linewidth,\ydown);
\end{tikzpicture} \\\\
&  & 14: If $\mathsf{valid}(v)$, call $\mathcal{\rho}_{\scriptscriptstyle \mathsf{VOTE}}\langle \mathsf{PreCommit}, \newline  \mathsf{PROPOSAL} \rangle$ &
\begin{tikzpicture}[baseline]
  \def\yup{4pt}
  \def\ydown{-4pt}
  \draw[->, thick] (0,\yup) -- (\linewidth,\yup);
  \draw[<-, thick] (0,\ydown) -- (\linewidth,\ydown);
\end{tikzpicture} \\\\
&  & 15: If $\mathsf{valid}(v)$, call $\mathcal{\rho}_{\scriptscriptstyle \mathsf{COMMIT}}\langle \mathsf{Commit}, \newline \mathsf{PROPOSAL} \rangle$ &
\begin{tikzpicture}[baseline]
      \draw[->, thick] (0,0) -- (\linewidth,0);
\end{tikzpicture} \\\\
\begin{tikzpicture}[baseline]
      \draw[<-, thick] (0,0) -- (\linewidth,0);
\end{tikzpicture} & Output $\langle \mathsf{Success}, \mathsf{sid}, \newline id(v) \rangle$  to $\mathcal{Z}$ &  16: If a message $\langle \mathsf{allowCommit}, \newline \mathsf{PROPOSAL} \rangle$ is received from $\mathcal{\rho}_{\scriptscriptstyle \mathsf{COMMIT}}$ and no $\langle \mathsf{TimeOver} \rangle$, call $\mathcal{\rho}_{\scriptscriptstyle \mathsf{PROPOSAL}}$ &
\begin{tikzpicture}[baseline]
      \draw[<-, thick] (0,0) -- (\linewidth,0);
\end{tikzpicture} \\\\
\begin{tikzpicture}[baseline]
      \draw[<-, thick] (0,0) -- (\linewidth,0);
\end{tikzpicture} & Output $\langle \mathsf{Failure}, \mathsf{sid}, \newline  \perp \rangle$ to $\mathcal{Z}$ &  17: Otherwise, call  $\mathcal{\rho}_{\scriptscriptstyle \mathsf{PROPOSAL}} \newline \langle \mathsf{NewRound}, h_p, \mathsf{round}_p+1 \rangle$ &  \\\\
\end{supertabular}
\end{myframebox}
\captionof{figure}{The Protocol $\mathcal{\pi}_{\mathsf{Tendermint}}$}
\label{fig:protocol-tendermint}
\end{center}

\subsubsection{The Protocol $\pi_{\mathsf{Tendermint}}$}

In this work, we present a formal model of the real-world Tendermint protocol, denoted as $\pi_{\mathsf{Tendermint}}$, with its execution flow illustrated in Fig.~\ref{fig:protocol-tendermint}. We represent the protocol as a multiparty interactive system involving proposers, validators, the environment $\mathcal{Z}$, and the adversary $\mathcal{A}$, enabling rigorous analysis of its resilience under adversarial network delays through interactions with ideal functionalities.

For the proposer, the model captures initiation of phase-specific timers via $\mathcal{G}_{\mathsf{TIME}}$, acquisition of proposal values through $\mathcal{\rho}_{\scriptscriptstyle \mathsf{STATE}}$, and validation using $\mathcal{F}_{\mathsf{SIG}}$ and $\mathcal{F}_{\mathsf{AUTH}}$. Invalid proposals trigger $\mathcal{\rho}_{\scriptscriptstyle \mathsf{PROPOSAL}}$ to advance rounds. Network latency is formalized by adversarial $ \mathsf{Sleep}/\mathsf{Wake} $ controls, which may conflict with timeouts from $\mathcal{G}_{\mathsf{TIME}}$. Prolonged delays force round failure, while timely wakeups allow broadcasting $ \mathsf{PROPOSAL}$.

For validators, upon receiving a proposal, timers are set via $\mathcal{G}_{\mathsf{TIME}}$, and voting proceeds through $\mathcal{\rho}_{\scriptscriptstyle \mathsf{VOTE}}$ to issue \textsf{Prevote} and \textsf{PreCommit} messages, culminating in $\mathcal{\rho}_{\scriptscriptstyle \mathsf{COMMIT}}$ for finalization. The adversary may observe and delay these interactions. Consensus is successful if a validator receives an $ \mathsf{allowCommit} $ signal before timeout, producing a $ \mathsf{Success} $ message for $\mathcal{Z}$ and advancing to the next height. Otherwise, expiration or invalidity triggers $\mathcal{\rho}_{\scriptscriptstyle \mathsf{PROPOSAL}}$ to restart the round and reports $ \mathsf{Failure} $, thereby ensuring eventual termination.

\subsection{Mapping from the Protocol to the Ideal Functionality}

\begin{theorem}[$\pi_{\mathsf{Tendermint}}$ GUC-Realizes $\mathcal{F}_{\mathsf{Tendermint}}^{V,\Delta,\delta,\tau}$]
The real-world protocol $\pi_{\mathsf{Tendermint}}$ GUC-realizes the ideal functionality $\mathcal{F}_{\mathsf{Tendermint}}^{V,\Delta,\delta,\tau}$ under adversarial network-delay attacks.
\label{thm:tendermint-mapping}
\end{theorem}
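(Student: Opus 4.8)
The plan is to exhibit a PPT simulator $\mathcal{S}$ that, interacting with $\mathcal{F}_{\mathsf{Tendermint}}^{V,\Delta,\delta,\tau}$ and the global functionalities $\mathcal{G}_{\mathsf{TIME}}$, $\mathcal{F}_{\mathsf{AUTH}}$, $\mathcal{F}_{\mathsf{BC}}$, $\mathcal{F}_{\mathsf{SIG}}$, $\mathcal{F}_{\mathsf{SYNC}}$, and $\mathcal{F}_{\mathsf{ReplyWAL}}$, reproduces the view that any environment $\mathcal{Z}$ would see when running with $\pi_{\mathsf{Tendermint}}$ and the real adversary $\mathcal{A}$. Since the real protocol already invokes these as ideal subroutines, the proof is naturally in the $\mathcal{G}$-hybrid (GUC) model, and the composition theorem cited in the preliminaries lets us keep the global setups fixed on both sides. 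First I would fix the structure of $\mathcal{S}$: it runs an internal black-box copy of $\mathcal{A}$, forwards all $\mathcal{Z}\leftrightarrow\mathcal{A}$ traffic verbatim, and translates between the two interfaces. The crucial translation concerns the adversarial delay handles: whenever $\mathcal{F}_{\mathsf{Tendermint}}$ issues a $\langle\mathsf{Sleep},\mathsf{sid},\mathsf{pid},\mathsf{phase}_p\rangle$, $\mathcal{S}$ presents the internal $\mathcal{A}$ with exactly the real-world view it expects (the corresponding $\mathsf{Sleep}$ message plus, for corrupted proposers, the leaked value obtained via $\mathsf{LeakValue}$), and relays $\mathcal{A}$'s chosen delay $\sigma$ back as $\langle\mathsf{Wake},\ldots,\sigma\rangle$; likewise $\mathcal{S}$ mirrors the $\mathsf{TimeStart}/\mathsf{TimeOver}$ exchanges with $\mathcal{G}_{\mathsf{TIME}}$, the $\mathsf{Send}/\mathsf{Sent}$ leakage of $\mathcal{F}_{\mathsf{AUTH}}$, the $\mathsf{Broadcasted}$ notifications of $\mathcal{F}_{\mathsf{BC}}$, and the key-generation / signing / verification queries of $\mathcal{F}_{\mathsf{SIG}}$, answering each on $\mathcal{A}$'s behalf exactly as the real functionality would. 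Corruption requests from $\mathcal{Z}$ are passed to $\mathcal{F}_{\mathsf{Tendermint}}$ via its $\mathsf{Corrupt}$ interface, and $\mathcal{S}$ then simulates the corrupted party's internal state from the information leaked by the ideal functionality.

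Next I would argue indistinguishability by a hybrid argument over a small number of games $H_0,\ldots,H_k$. $H_0$ is the real execution $\mathsf{GEXEC}_{\pi_{\mathsf{Tendermint}},\mathcal{A},\mathcal{Z}}$; $H_k$ is the ideal execution $\mathsf{GEXEC}_{\mathcal{F}_{\mathsf{Tendermint}},\mathcal{S},\mathcal{Z}}$. The intermediate hops replace, one subroutine family at a time, the real protocol logic by the ideal-functionality branch: (i) replace the in-protocol proposer-election / state-bookkeeping of $\mathcal{\rho}_{\scriptscriptstyle\mathsf{PROPOSAL}}$ and $\mathcal{\rho}_{\scriptscriptstyle\mathsf{STATE}}$ by $\mathcal{F}_{\mathsf{GetProposer}}$'s deterministic index computation — identical by construction of the round-robin rule; (ii) route all $\mathsf{PROPOSAL}/\mathsf{PREVOTE}/\mathsf{PRECOMMIT}$ dissemination through $\mathcal{F}_{\mathsf{BC}}$ and all authentication checks through $\mathcal{F}_{\mathsf{AUTH}}$ and $\mathcal{F}_{\mathsf{SIG}}$, which is a perfect rewrite since $\pi_{\mathsf{Tendermint}}$ already calls these; (iii) replace the real timer bookkeeping in $\mathcal{\rho}_{\scriptscriptstyle\mathsf{VOTE}}$ and $\mathcal{\rho}_{\scriptscriptstyle\mathsf{COMMIT}}$ by $\mathcal{G}_{\mathsf{TIME}}$'s uninterruptible countdown and the $\mathsf{Sleep}/\mathsf{Wake}$ abstraction; (iv) replace the WAL-recovery code path by $\mathcal{F}_{\mathsf{ReplyWAL}}$. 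Each hop is either perfectly indistinguishable (the functionality was already there, or the computation is deterministic and syntactically matched) or indistinguishable up to the negligible unforgeability slack already baked into $\mathcal{F}_{\mathsf{SIG}}$ — so the only place where a reduction to a computational assumption appears is the signature hop, and there the argument is the standard one that $\mathcal{F}_{\mathsf{SIG}}$'s verification rule already forbids the adversary from ever producing a forgery that would diverge the two worlds. I would also verify the bookkeeping invariants that make the two worlds agree step for step: vote counters, the $\mathsf{lockedValue}/\mathsf{lockedRound}$ and $\mathsf{validValue}/\mathsf{validRound}$ updates, the $2f{+}1$ quorum thresholds, the $f{+}1$ round-advance threshold, and the $\delta+\sigma$ versus $\tau^{r}_{\mathsf{phase}}$ branching — showing that on every input $\mathcal{F}_{\mathsf{Tendermint}}$ produces the same outputs to honest parties and the same leakage to $\mathcal{S}$ that $\pi_{\mathsf{Tendermint}}$ produces to honest parties and $\mathcal{A}$.

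I expect the main obstacle to be the faithful simulation of adversarial timing interleavings across concurrent sessions: in the real protocol the adversary's $\mathsf{Sleep}/\mathsf{Wake}$ choices and $\mathcal{G}_{\mathsf{TIME}}$'s global round decrements are tightly coupled with the message-delivery scheduling of $\mathcal{F}_{\mathsf{BC}}$ and $\mathcal{F}_{\mathsf{Ch}}$, and $\mathcal{S}$ must ensure that the ideal functionality's internal phase transitions stay in lockstep with what $\mathcal{A}$ believes the honest parties are doing — in particular that a $\mathsf{TimeOver}$ delivered by $\mathcal{G}_{\mathsf{TIME}}$ in the ideal world is emitted at exactly the round the real timer would have fired, so that the "no $\mathsf{TimeOver}$ has been received" guards evaluate identically on both sides. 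Because $\mathcal{G}_{\mathsf{TIME}}$'s countdown is by construction non-interruptible and its $\mathsf{TimeOver}$ is concealed from the adversary, this coupling is actually forced rather than adversarially controlled, which is what makes the simulation go through; the write-up will need to spell out a round-by-round scheduling invariant making this precise. A secondary subtlety is handling adaptive corruption mid-phase: when $\mathcal{Z}$ corrupts a validator after it has partially executed a phase, $\mathcal{S}$ must produce a consistent internal transcript from the state $\mathcal{F}_{\mathsf{Tendermint}}$ reveals via $\mathsf{Corrupt}$ together with the values it already leaked, and I would argue this is always possible because every message an honest party would have sent is a deterministic function of the proposal value (leaked on corruption of the proposer) and the public vote counts. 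Once these two points are nailed down, assembling the hybrids yields $\mathsf{GEXEC}_{\pi_{\mathsf{Tendermint}},\mathcal{A},\mathcal{Z}}\approx\mathsf{GEXEC}_{\mathcal{F}_{\mathsf{Tendermint}},\mathcal{S},\mathcal{Z}}$ for all $\mathcal{Z}$, which is the claim.
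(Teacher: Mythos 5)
Your proposal follows essentially the same route as the paper's proof: construct a simulator $\mathcal{S}$ that runs internal copies of the honest parties and a black-box $\mathcal{A}$, translates the adversary's delay choices ($\mathsf{Sleep}/\mathsf{Wake}$) into $\mathcal{G}_{\mathsf{TIME}}$-driven $\mathsf{TimeStart}/\mathsf{TimeOver}$ events and corresponding updates to $\mathcal{F}_{\mathsf{Tendermint}}$, and argues that $\mathcal{Z}$ cannot tell adversarial delays from protocol-driven timeouts. Your extra scaffolding (the hybrid over subroutine families, the bookkeeping invariants, and the adaptive-corruption discussion) is more detailed than the paper's brief sketch but does not change the underlying argument.
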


\begin{proof}
To prove that $\pi_{\mathsf{Tendermint}}$ GUC-realizes $\mathcal{F}_{\mathsf{Tendermint}}$, we construct a simulator $\mathcal{S}$ that, by interacting with the ideal functionality $\mathcal{F}_{\mathsf{Tendermint}}$ and the global timer functionality $\mathcal{G}_{\mathsf{TIME}}$, simulates the interaction between the real-world protocol $\pi_{\mathsf{Tendermint}}$ and the adversary $\mathcal{A}$. The construction must ensure that no environment $\mathcal{Z}$ can distinguish between the interaction in the ideal world and the execution of the protocol in the real world. The simulator's main task is to translate the adversary's actions in the real world, particularly its control over network delays, into valid instructions for the ideal functionality, thereby creating a view for the environment that is consistent with the protocol’s rules.

The simulator $\mathcal{S}$ runs dummy parties of $\pi_{\mathsf{Tendermint}}$ for each honest party, intercepting adversarial messages and forwarding them to the corresponding instances. Time is managed via the ideal timer functionality $\mathcal{G}_{\mathsf{TIME}}$, rather than local clocks. At the start of each phase in round $r$, $\mathcal{S}$ computes the timeout $\delta_r = \tau^{\mathsf{init}}_{\mathsf{phase}} + r \cdot \tau^{\mathsf{step}}_{\mathsf{phase}}$ and issues a $\langle \mathsf{TimeStart}, \mathsf{sid}, \mathsf{phase}_p, \delta_r \rangle$ command. If $\mathcal{A}$ supplies quorum messages (e.g., $2f+1$ PreVotes) before timeout, $\mathcal{S}$ advances the simulation, resets the timer, and updates $\mathcal{F}_{\mathsf{Tendermint}}$. Otherwise, upon $\langle \mathsf{TimeOver}, \mathsf{sid}, \mathsf{phase}_p, \delta \rangle$ from $\mathcal{G}_{\mathsf{TIME}}$, $\mathcal{S}$ simulates honest timeout behavior (nil votes, round change) and informs $\mathcal{F}_{\mathsf{Tendermint}}$.

Indistinguishability follows since $\mathcal{S}$ translates adversarial delays into valid protocol-driven timeouts, leaving $\mathcal{Z}$ unable to discern whether a round change arises from adversarial actions or network latency. Security is ensured by locking, threshold voting, escalating timeouts, and BFT limits. Termination is guaranteed since $\mathcal{G}_{\mathsf{TIME}}$ enforces increasing timeouts, eventually enabling quorum despite bounded delays. Thus, $\pi_{\mathsf{Tendermint}}$ GUC-realizes the ideal functionality.
\end{proof}

\noindent Therefore, it is established that $\pi_{\mathsf{Tendermint}}$ correctly implements the ideal functionality $\mathcal{F}_{\mathsf{Tendermint}}$ under network delay attacks, proving Theorem \ref{thm:tendermint-mapping}.

\section{UC Termination Proof for Tendermint}
\label{sec:securityProof}

\begin{theorem}[Tendermint Protocol UC Termination]
Let the total number of nodes be $n$ with at most $f$ Byzantine nodes ($n \geq 3f+1$). After Global Stabilization Time (GST), the system operates in a partially synchronous model with maximum network delay $\Delta$. Define:

\begin{equation}
T^\ast = \mathsf{GST} + O(f^2\Delta)
\end{equation}

Then for any block height $h$ and any honest process $p \in \mathcal{H}$, when system time $t \geq T^\ast$, protocol $\pi_{\mathsf{Tendermint}}$ terminates at height $h$ and outputs a unique decision value $v_h$: $\mathsf{decide}_h(p) = v_h $ and $ v_h $ is consistent across all $ p \in \mathcal{H}.$
\label{thm:tendermint-termination}
\end{theorem}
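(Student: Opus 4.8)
The plan is to first push the statement into the ideal world and then run a round-by-round liveness argument against $\mathcal{F}_{\mathsf{Tendermint}}$. By Theorem~\ref{thm:tendermint-mapping}, $\pi_{\mathsf{Tendermint}}$ GUC-realizes $\mathcal{F}_{\mathsf{Tendermint}}^{V,\Delta,\delta,\tau}$, so the environment's view — which records each honest validator's $\langle \mathsf{Success}, \mathsf{sid}, \mathsf{id}(v)\rangle$ output together with, via $\mathcal{G}_{\mathsf{TIME}}$, the global time at which it is produced — is indistinguishable between the two worlds. Hence it suffices to show that in the ideal execution, for every height $h$, every honest validator decides some value by global time $T^\ast$ and all these decisions coincide; were this to fail in the real world with non-negligible probability, $\mathcal{Z}$ could detect it, contradicting the emulation. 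From here I argue about $\mathcal{F}_{\mathsf{Tendermint}}$ interacting with an arbitrary PPT simulator that, after $\mathsf{GST}$, respects the partial-synchrony bound (so every reported attack delay satisfies $\sigma \le \Delta$ once $t \ge \mathsf{GST}$).

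The core consists of four lemmas. \emph{(i) Agreement and the locking invariant.} If honest $p$ sets $\mathsf{decision}_p[h]=v$ in round $r$, it collected more than $2f+1$ $\mathsf{PRECOMMIT}$s for $\mathsf{id}(v)$, so at least $f+1$ honest validators took the $\mathsf{PreCommit}$ branch setting $\mathsf{lockedValue}_p\gets v$, $\mathsf{lockedRound}_p\gets r$. By quorum intersection for $n\ge 3f+1$, any later quorum of size exceeding $2f+1$ contains one such locked honest validator; using the Prevote guard $\mathsf{lockedRound}_p\le\mathsf{validRound}_p\lor\mathsf{lockedValue}_p=v$, an induction on rounds $r'\ge r$ shows no $v'\neq v$ can ever collect $2f+1$ prevotes, hence no honest validator precommits or decides $v'\neq v$; together with termination this forces all honest decisions at height $h$ to equal $v$, and $\mathsf{valid}(v)$ holds because honest validators only prevote valid values. \emph{(ii) Bounded round skew after GST.} Once $f+1$ honest validators enter round $r$, their round-tagged messages reach all honest validators within $\Delta$ and trigger the \textbf{Round Advance} rule ($\mathsf{count}_{\mathsf{nextround}}>f+1\wedge\mathsf{round}>\mathsf{round}_p$), so within $O(\Delta)$ every honest validator is in round $\ge r$; thus the honest round gap stays $O(1)$ past $\mathsf{GST}$. \emph{(iii) Good-round progress.} Call round $r$ \emph{good} if it begins after $\mathsf{GST}$, $\mathsf{Proposer}(h,r)$ is honest, and $\delta_r\ge\delta+3\Delta$. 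In a good round the honest proposer broadcasts a valid $\langle\mathsf{PROPOSAL},h,r,v\rangle$ (taking $v=\mathsf{validValue}$ if some honest validator is locked, which by (i) is the only value anyone is locked on); all honest receive it within $\Delta$ and prevote $\mathsf{id}(v)$, collect $2f+1$ prevotes within $2\Delta$, precommit $\mathsf{id}(v)$, collect $2f+1$ precommits within $3\Delta$, and set $\mathsf{decision}_p[h]=v$ before $\mathcal{G}_{\mathsf{TIME}}$ fires, since each phase timeout is $\ge\delta_r\ge\delta+3\Delta$ while adversarial delay is $\le\Delta$. \emph{(iv) A good round arrives early.} Round-robin rotation over $|V|$ with at most $f$ Byzantine ensures every window of $f+1$ consecutive rounds contains an honest proposer, and $\delta_r$ grows by $\Delta$ per round so the threshold is met after $O(1)$ rounds; hence the first good round has index $r^\ast=O(f)$.

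Assembling the bound: the duration of round $r$ is at most the sum of its phase timeouts, which is $O(\delta_r)=O(r\Delta)$, so the time from $\mathsf{GST}$ until all honest validators have entered $r^\ast$ is $\sum_{r=1}^{r^\ast}O(r\Delta)=O((r^\ast)^2\Delta)=O(f^2\Delta)$, the skew correction from (ii) contributing only an additive $O(\Delta)$. By (iii), round $r^\ast$ then completes in a further $O(\delta_{r^\ast})=O(f\Delta)$, so every honest validator sets $\mathsf{decision}_p[h]=v$ by $\mathsf{GST}+O(f^2\Delta)=T^\ast$, sends $\langle\mathsf{RoundOK}\rangle$ to $\mathcal{F}_{\mathsf{SYNC}}$, advances to height $h+1$, and the argument repeats; by (i) the decided value is the unique valid $v_h$, consistent across $\mathcal{H}$. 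The step I expect to be the main obstacle is the interaction of (ii) and (iii): rigorously showing that after $\mathsf{GST}$ the honest validators settle into a common round whose skew is small enough that in the first good round every $\mathcal{G}_{\mathsf{TIME}}$ countdown $\delta_r$ genuinely dominates the worst-case $\delta+\sigma$ with $\sigma\le\Delta$ — i.e.\ that the simulator cannot exploit the $\mathsf{Sleep}/\mathsf{Wake}$ and \textbf{Round Advance} interfaces of $\mathcal{F}_{\mathsf{Tendermint}}$ to keep honest validators perpetually out of phase. This will require a careful invariant tying together the timer states of different honest validators across the \textbf{NewRound}, \textbf{Prevote}, \textbf{PreCommit}, \textbf{Commit}, and \textbf{Round Advance} handlers.
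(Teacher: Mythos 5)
Your proposal is correct in substance and reaches the same quantitative core as the paper -- the adversary can spoil at most $f$ consecutive rounds, the round with an honest proposer after $\mathsf{GST}$ succeeds because the linearly growing timeouts $\tau^{(r)}=\tau^{\mathsf{init}}+r\cdot\tau^{\mathsf{step}}$ eventually dominate $\delta+\sigma$ with $\sigma\le\Delta$, and summing the per-round phase timeouts gives the quadratic bound -- but it is packaged along a genuinely different route. You first invoke Theorem~\ref{thm:tendermint-mapping} to transfer the claim to the ideal world and then prove liveness of $\mathcal{F}_{\mathsf{Tendermint}}$ via four explicit lemmas (locking/quorum-intersection agreement, bounded round skew through the \textbf{Round Advance} rule, good-round progress, and an honest proposer within $O(f)$ rounds by rotation). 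The paper instead argues real/ideal indistinguishability directly, through a case analysis over corruption patterns (proposer only, proposer plus $k\le f-1$ validators, $k\le f$ validators) crossed with the non-timeout ($\delta+\sigma_{\mathsf{adv}}\le\tau$) and timeout ($\delta+\sigma_{\mathsf{adv}}>\tau$) regimes, showing in each case how the simulator reproduces the delays; its Case~3 is essentially your lemmas (iii)--(iv) with an explicit $t+5\Delta+\tau^{r-1}_{\mathsf{Precommit}}$ timing chain at round $r=f+1$, and its Global Time Analysis is your assembly step, carried out with concrete constants ($T^\ast=2(f+2)(f+3)\Delta$). Your decomposition buys an explicit uniqueness argument for $v_h$ (quorum intersection plus the prevote locking guard), which the paper leaves largely implicit, and it cleanly separates the composability step from the liveness analysis; the paper's route buys explicit constants and a per-scenario description of the simulator's delay handling, which is the content actually needed for the indistinguishability claim. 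Note that the issue you flag as the main obstacle (round skew versus timer domination) is handled in the paper only informally via gossip of the $2f+1$ precommits bounding entry-time differences by $\Delta$, and both your sketch and the paper implicitly assume the honest round counter at $\mathsf{GST}$ is $O(f)$, so neither is more complete on that point.
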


\subsection{Tendermint Protocol Features and Adversary Model}

The core of Tendermint is to achieve Byzantine fault tolerance through a five-phase consensus process: $\emph{NewRound}$, $\emph{Propose}$, $\emph{Prevote}$, $\emph{PreCommit}$, and $\emph{Commit}$. Its defense against delay attacks relies on a set of timeouts that grow linearly:

\begin{equation}
\tau^{(r)}_{\text{phase}} = \tau^{\mathsf{init}}_{\text{phase}} + r \cdot \tau^{\mathsf{step}}_{\text{phase}}
\end{equation}

\noindent where $r$ is the current round and $\tau^{\mathsf{init}}, \tau^{\mathsf{step}}$ are configurable thresholds. To preserve consistency, Tendermint employs state-locking via \text{lockedRound} and \text{validRound}, ensuring that honest validators reject conflicting proposals. Proposal rights are rotated through a weighted round-robin mechanism, where each validator’s voting power is dynamically updated as $\mathsf{votingPower}_i = \mathsf{stake}_i - \sum_{j \neq i} \mathsf{stake}_j$, thereby preventing centralized control and promoting fairness.

The adversary $\mathcal{A}$ may corrupt up to $f$ validators and one proposer, enabling it to inject arbitrary proposals $\langle \mathsf{PROPOSAL}, h_p, \mathsf{round}_p, v \rangle$, forge votes $\langle \mathsf{Vote.Phase}, h_p, \mathsf{round}_p, \mathsf{Vote.Result} \rangle$, and introduce targeted message delays through $\langle \mathsf{Sleep/Wake}, \mathsf{sid}, \sigma_{\mathsf{adv}} \rangle$ commands. The model enforces three constraints: (1) adversaries cannot forge signatures from honest parties; (2) consensus halts if active validators drop below $2f+1$; and (3) Tendermint guarantees eventual termination through timeouts and nil-vote escalation under delay attacks.

\subsection{Termination Proof}

In this section, we show that, under the UC framework, real and ideal executions remain indistinguishable, thereby ensuring Tendermint’s termination guarantee. The complete termination proof of the Tendermint consensus protocol under the UC framework is as follows. To structure the proof, we analyze the protocol's behavior under three mutually exclusive scenarios based on adversarial actions and the protocol's timeout mechanism. We will demonstrate that in each case, the simulator can perfectly mimic the real-world outcomes, thus upholding the indistinguishability claim and confirming Tendermint's termination property. 

\subsubsection{Case 1: Non-Timeout Scenario ($\delta+\sigma_{\mathsf{adv}} \leq \tau$)}

\paragraph{Corrupted Proposer Only}

When the adversary corrupts only the proposer, it delays the proposal procedure utilizing invoking \(\mathcal{F}_{\mathsf{BC}}\) to send a message \(\langle \mathsf{PROPOSAL}, \\ h_p, \mathsf{round}_p, v \rangle\) at time \(t_0 + \sigma_{\mathsf{adv}}\), where \(\sigma_{\mathsf{adv}} \leq \tau\). Honest parties receive the proposal within \([t_0, t_0 + \Delta]\), proceed to validate and broadcast $\mathsf{PREVOTE}$ messages by \(t_0 + 2\Delta\), collect \(\geq 2f+1\) votes and issue $\mathsf{PRECOMMIT}s$ by \(t_0 + 3\Delta\), and finalize via $\mathsf{COMMIT}$ by \(t_0 + 4\Delta\). In the ideal world, the simulator \(\mathcal{S}\) mimics this delayed broadcast and reproduces all protocol phases with matching timing. Since all proposal, voting, and commit operations occur within the same time windows and follow identical rules, the environment \(\mathcal{Z}\) observes indistinguishable executions across real and ideal worlds. Thus, termination is guaranteed when only the proposer is corrupted and \(\sigma_{\mathsf{adv}} \leq \tau\).

\paragraph{Corrupted Proposer and Validators}

When the adversary corrupts the proposer and up to \(k \leq f-1\) validators, it broadcasts a delayed proposal via \(\mathcal{F}_{\mathsf{BC}}\) at time \(t_0 + \sigma_{\mathsf{adv}}\) and withholds votes from the corrupted validators. Honest parties receive the proposal within \([t_0, t_0 + \Delta]\), validate it, and enter the Prevote phase. Despite partial vote withholding, honest nodes can still collect \(\geq 2f+1\) votes within the timeout window (as \(\sigma_{\mathsf{adv}} \leq \tau\)), enabling them to complete both Prevote and PreCommit phases on time. All honest nodes finalize the Commit phase by \(t_0 + 4\Delta\), ensuring termination through honest majority participation. In the ideal world, the simulator \(\mathcal{S}\) reproduces these behaviors by synchronizing the delayed proposal broadcast and selectively delaying corrupted validators’ votes, while honest-party actions are faithfully simulated. The resulting protocol execution preserves time and output consistency across both worlds---honest nodes perform identical operations at the same logical timestamps---thus rendering the environment \(\mathcal{Z}\) unable to distinguish between the real and ideal executions. Therefore, Tendermint guarantees termination when the adversary controls the proposer and a minority of validators, provided \(\sigma_{\mathsf{adv}} \leq \tau\).

\paragraph{Corrupted Validators}

When the adversary controls up to \(k \leq f\) validators, termination depends on the honest majority. At time \(t_0\), the honest proposer broadcasts the proposal \(\langle \mathsf{PROPOSAL}, h_p, \mathsf{round}_p, v \rangle\) to all parties, which honest nodes receive within \([t_0, t_0 + \Delta]\) and then proceed to validate and enter the Prevote phase. Corrupted validators delay their votes until \(t_0 + \Delta + \sigma_{\mathsf{adv}}\); however, since \(\sigma_{\mathsf{adv}} \leq \tau\), honest parties can still gather at least \(2f + 1\) votes on time, enabling timely progression through the PreCommit and Commit phases, with final commits completed before \(t_0 + 4\Delta\). The simulator \(\mathcal{S}\) faithfully reproduces adversarial delays on corrupted validators while simulating honest-node compliance, ensuring identical timing and behavior across real and ideal executions. This results in indistinguishability from the environment \(\mathcal{Z}\), which observes matching proposals, vote patterns, and commit outcomes in both worlds. Hence, Tendermint guarantees termination under validator corruption within the adversarial delay bound \(\sigma_{\mathsf{adv}} \leq \tau\).

\subsubsection{Case 2: Timeout-Triggered Scenario ($\delta+\sigma_{\mathsf{adv}} > \tau$)}

\paragraph{Corrupted Proposer Only}  
When the adversary controls the Proposer and delays the proposal broadcast beyond the timeout threshold $\tau$, honest nodes in the real world detect the absence of a timely proposal and promptly advance to the next round, thereby preventing indefinite stalling and ensuring consensus progress. In the ideal world, the simulator $\mathcal{S}$ mimics this behavior by suppressing the adversary-controlled Proposer's broadcast and triggering the round transition synchronously with real-world timing. This synchronization ensures that honest parties' actions and round progressions are indistinguishable between worlds. As a result, the environment $\mathcal{Z}$ observes identical sequences of events, preserving both temporal and output consistency, which guarantees termination and renders the real and ideal executions indistinguishable.

\paragraph{Corrupted Proposer and Validators}  
When the adversary controls both the Proposer and up to $k \leq f-1$ Validators while delaying communication beyond $\tau$, the protocol behavior parallels Case 2(a): honest nodes detect the Proposer's timeout and promptly initiate new rounds. Although corrupted Validators may delay votes, honest parties proceed unaffected due to the enforced timeout mechanisms. The simulator $\mathcal{S}$ reproduces the Proposer timeout and synchronizes adversary-induced vote delays to mirror real-world timing, ensuring honest nodes behave consistently across executions. Consequently, environment $\mathcal{Z}$ observes identical event timing and outputs in both worlds, preserving indistinguishability. Thus, termination is guaranteed, with $\mathcal{S}$ maintaining perfect temporal and output alignment.

\paragraph{Corrupted Validators}  
When the adversary controls up to $k \leq f$ Validators and induces delays exceeding the timeout $\tau$, termination is still ensured by the honest majority. Real-world honest nodes detect these timeouts and advance rounds regardless of adversarial vote delays, relying on the protocol’s timeout mechanisms to maintain progress. Simulator $\mathcal{S}$ faithfully reproduces the adversarial delay patterns while ensuring honest nodes follow the protocol, thereby perfectly mirroring temporal effects and adversarial influence. This alignment prevents environment $\mathcal{Z}$ from distinguishing between real and ideal executions based on timing or outcomes. Consequently, strict temporal and output consistency is preserved: honest nodes perform identical actions and produce matching outputs in both worlds, guaranteeing termination and maintaining indistinguishability despite adversarial Validator delays.

\subsubsection{Case 3: Consensus Completion at Round $r = f+1$}

In the worst-case scenario, Tendermint endures consecutive timeouts for the first $f$ rounds before successfully committing in the first honest round $r = f+1$. Consider a correct process $p$ as the earliest entrant to round $r$, transitioning immediately after the expiration of $\tau^{r-1}_{\mathsf{Precommit}}$. At this point, $p$ has received at least $2f + 1$ PreCommit votes for round $r-1$ by time $t$. By gossip propagation guarantees, all honest processes receive these votes by $t + \Delta$, ensuring their entry into round $r$ by $t + \Delta + \tau^{r-1}_{\mathsf{Precommit}}$. The slowest honest process $q$ enters round $r$ at this time, broadcasting its round-$r$ proposal, which all honest parties receive by $t + 2\Delta + \tau^{r-1}_{\mathsf{Precommit}}$. After validating the proposal $v$, they issue Prevote messages, and upon collecting $2f+1$ Prevotes by $t + 3\Delta + \tau^{r-1}_{\mathsf{Precommit}}$, they trigger PreCommit votes, ensuring convergence on the candidate value.

Subsequently, once sufficient PreCommits are gathered, all processes enter the Commit phase by $t + 4\Delta + \tau^{r-1}_{\mathsf{Precommit}}$, and final commitment occurs upon receiving the required Commit votes by $t + 5\Delta + \tau^{r-1}_{\mathsf{Precommit}}$, terminating the protocol at height $h$ with committed value $v$. In the ideal world, simulator $\mathcal{S}$ replicates this behavior by setting $\tau^{(f+1)}_{\mathsf{phase}}$ to align with the observed round durations, faithfully reproducing timeout transitions and message scheduling. This ensures that the environment $\mathcal{Z}$ cannot distinguish between the ideal and real executions, thereby establishing UC indistinguishability under worst-case adversarial delays.

\subsection{Global Time Analysis}

To thoroughly analyze Tendermint's worst-case global termination time, we consider a scenario where an adversary, controlling $f$ Byzantine replicas, strategically forces the first $f$ rounds ($r=0, 1, \dots, f$) to timeout due to manipulated message delays or faults. Consensus is guaranteed to be reached in round $r=f+1$ because the adversary can no longer prevent quorum formation. The total worst-case time, $T^*$, therefore aggregates the cumulative delays across all four phases (Propose, Prevote, PreCommit, and Commit) of these $f$ failed rounds, as well as the delays incurred during the first successful round ($r=f+1$). Formally, $T^*$ is defined as:

\begin{equation}
\begin{aligned}
T^* = 4\sum_{r=0}^f \tau^{(r)}_{\mathsf{phase}} + 4\tau^{(f+1)}_{\mathsf{phase}}
\end{aligned}
\end{equation}

Here, $\tau^{(r)}_{\mathsf{phase}}$ represents the per-phase timeout in round $r$. Tendermint's dynamic timeout mechanism is designed to ensuretermination, exhibiting linear growth with the round number:

\begin{equation}
\begin{aligned}
\tau_{\mathsf{phase}}^{(r)}
  &= \tau_{\mathsf{phase}}^{\mathsf{init}} + r \cdot \tau_{\mathsf{phase}}^{\mathsf{step}} \\
  &= \Delta + r \cdot \Delta \\
  &= (1 + r) \cdot \Delta
\end{aligned}
\end{equation}
where $\Delta$ is the network one-way delay, accounting for the maximum anticipated network latency for a single message transmission.

Substituting this definition into the $T^*$ formula yields the following closed-form expressions for each component:

\begin{align}
4\sum_{r=0}^f \tau_{\mathsf{phase}}^{(r)}
  &= 2(f+1)(f+2)\,\Delta, \\[6pt]
4\,\tau_{\mathsf{phase}}^{(f+1)}
  &= 4(f+2)\,\Delta.
\end{align}

Summing these components, we derive the simplified worst-case termination time:

\begin{equation}
\begin{aligned}
T^* 
  &= 2(f+1)(f+2)\,\Delta + 4(f+2)\,\Delta \\
  &= 2(f+2)\,\Delta\bigl[(f+1) + 2\bigr] \\
  &= 2(f+2)(f+3)\,\Delta \\
  &= O(f^2\Delta)
\end{aligned}
\end{equation}

The $O(f^2\Delta)$ bound rigorously confirms Tendermint's polynomial termination guarantee even under severe adversarial delay attacks. Crucially, this analysis satisfies indistinguishability requirements in the UC framework, proving Theorem \ref{thm:tendermint-termination}. For any message $m$, the delay discrepancy $\sigma_{\mathsf{delay}}(m)$ between real-world executions and their simulated counterparts is negligible. The simulator $\mathcal{S}$ meticulously reproduces delay attacks by mapping adversarial delays $\sigma_{\mathsf{adv}}$ to timeout-triggered round transitions. This precise mapping prevents the environment $\mathcal{Z}$ from distinguishing between real and ideal executions based on timing or message patterns, thereby establishing UC-security, i.e., termination, for Tendermint under adversarial conditions.

\section{Conclusion}\label{sec:conclusion}

\subsection{Summary}

In this paper, we present the first UC modeling of the Tendermint protocol, and the proof for its termination against network delay attacks in the UC framework. By embedding Tendermint within the UC framework, we distill its core mechanisms, proposal locking, linear timeout progression, and adaptive round advancement, into the ideal functionality $\mathcal{F}_{\mathsf{Tendermint}}^{V,\Delta,\delta,\tau}$. Our simulator accurately captures both honest party delays and adversarial interference, demonstrating that Tendermint guarantees termination with up to $ f < n/3 $ Byzantine faults under $\Delta$-bounded message delays. The analysis further shows that Tendermint achieves a worst-case termination latency of $ O(f^2 \Delta) $ under partial synchrony. This result not only demonstrates the protocol’s theoretical resilience but also bridges the gap between practical performance and formal security guarantees, establishing Tendermint as a robust solution in the context of fault-tolerant consensus.

\subsection{Future Directions}

There are several directions for future work. First, our model assumes a uniform, known delay bound, $\Delta$. Relaxing this assumption to accommodate heterogeneous or mobile-network environments, where delays may vary unpredictably, would enhance the model’s applicability and better reflect real-world conditions. Second, 
it would be interesting to apply the UC modeling methodology of this paper to a broader range of BFT-like protocols. Moreover, integrating Tendermint with higher-level blockchain modules, such as mempools, smart contracts, and cross-chain bridges, would validate its robustness in fully composable, real-world applications, paving the way for modular, scalable and secure blockchain ecosystems.

%
%
%
\bibliographystyle{splncs04}
\bibliography{paper}

\begin{thebibliography}{10}
\providecommand{\url}[1]{\texttt{#1}}
\providecommand{\urlprefix}{URL }
\providecommand{\doi}[1]{https://doi.org/#1}

\bibitem{achenbach2015synchronous}
Achenbach, D., M{\"u}ller-Quade, J., Rill, J.: Synchronous universally composable computer networks. In: International Conference on Cryptography and Information Security in the Balkans. pp. 95--111. Springer (2015)

\bibitem{androulaki2018hyperledger}
Androulaki, E., Barger, A., Bortnikov, V., Cachin, C., Christidis, K., De~Caro, A., Enyeart, D., Ferris, C., Laventman, G., Manevich, Y., et~al.: Hyperledger fabric: a distributed operating system for permissioned blockchains. In: Proceedings of the thirteenth EuroSys conference. pp. 1--15 (2018)

\bibitem{badertscher2024bitcoin}
Badertscher, C., Maurer, U., Tschudi, D., Zikas, V.: Bitcoin as a transaction ledger: A composable treatment. Journal of Cryptology  \textbf{37}(2), ~18 (2024)

\bibitem{baudet2019state}
Baudet, M., Ching, A., Chursin, A., Danezis, G., Garillot, F., Li, Z., Malkhi, D., Naor, O., Perelman, D., Sonnino, A.: State machine replication in the libra blockchain. The Libra Assn., Tech. Rep  \textbf{7} (2019)

\bibitem{bessani2014state}
Bessani, A., Sousa, J., Alchieri, E.E.: State machine replication for the masses with bft-smart. In: 2014 44th Annual IEEE/IFIP international conference on dependable systems and networks. pp. 355--362. IEEE (2014)

\bibitem{buchman2016tendermint}
Buchman, E.: Tendermint: Byzantine fault tolerance in the age of blockchains. Ph.D. thesis, University of Guelph (2016)

\bibitem{buchman2018latest}
Buchman, E., Kwon, J., Milosevic, Z.: The latest gossip on bft consensus (2019), \url{https://arxiv.org/abs/1807.04938}

\bibitem{camenisch2011framework}
Camenisch, J., Krenn, S., Shoup, V.: A framework for practical universally composable zero-knowledge protocols. In: International Conference on the Theory and Application of Cryptology and Information Security. pp. 449--467. Springer (2011)

\bibitem{canetti2001universally}
Canetti, R.: Universally composable security: A new paradigm for cryptographic protocols. In: Proceedings 42nd IEEE Symposium on Foundations of Computer Science. pp. 136--145. IEEE (2001)

\bibitem{canetti2004universally}
Canetti, R.: Universally composable signature, certification, and authentication. In: Proceedings. 17th IEEE Computer Security Foundations Workshop, 2004. pp. 219--233. IEEE (2004)

\bibitem{canetti2007universally}
Canetti, R., Dodis, Y., Pass, R., Walfish, S.: Universally composable security with global setup. In: Theory of Cryptography: 4th Theory of Cryptography Conference, TCC 2007, Amsterdam, The Netherlands, February 21-24, 2007. Proceedings 4. pp. 61--85. Springer (2007)

\bibitem{canetti2001universallycommitments}
Canetti, R., Fischlin, M.: Universally composable commitments. In: Advances in Cryptology—CRYPTO 2001: 21st Annual International Cryptology Conference, Santa Barbara, California, USA, August 19--23, 2001 Proceedings 21. pp. 19--40. Springer (2001)

\bibitem{canetti2002universallykeyexchange}
Canetti, R., Krawczyk, H.: Universally composable notions of key exchange and secure channels. In: Advances in Cryptology—EUROCRYPT 2002: International Conference on the Theory and Applications of Cryptographic Techniques Amsterdam, The Netherlands, April 28--May 2, 2002 Proceedings 21. pp. 337--351. Springer (2002)

\bibitem{canetti2002universally}
Canetti, R., Lindell, Y., Ostrovsky, R., Sahai, A.: Universally composable two-party and multi-party secure computation. In: Proceedings of the thiry-fourth annual ACM symposium on Theory of computing. pp. 494--503 (2002)

\bibitem{castro1999practical}
Castro, M., Liskov, B., et~al.: Practical byzantine fault tolerance  \textbf{99}(1999),  173--186 (1999)

\bibitem{ciampi2024universal}
Ciampi, M., Kiayias, A., Shen, Y.: Universal composable transaction serialization with order fairness. In: Annual International Cryptology Conference. pp. 147--180. Springer (2024)

\bibitem{duan2018beat}
Duan, S., Reiter, M.K., Zhang, H.: Beat: Asynchronous bft made practical. In: Proceedings of the 2018 ACM SIGSAC Conference on Computer and Communications Security. pp. 2028--2041 (2018)

\bibitem{dwork1988consensus}
Dwork, C., Lynch, N., Stockmeyer, L.: Consensus in the presence of partial synchrony. Journal of the ACM (JACM)  \textbf{35}(2),  288--323 (1988)

\bibitem{gueta2019sbft}
Gueta, G.G., Abraham, I., Grossman, S., Malkhi, D., Pinkas, B., Reiter, M., Seredinschi, D.A., Tamir, O., Tomescu, A.: Sbft: A scalable and decentralized trust infrastructure. In: 2019 49th Annual IEEE/IFIP international conference on dependable systems and networks (DSN). pp. 568--580. IEEE (2019)

\bibitem{kiayias2022peredi}
Kiayias, A., Kohlweiss, M., Sarencheh, A.: Peredi: Privacy-enhanced, regulated and distributed central bank digital currencies. In: Proceedings of the 2022 ACM SIGSAC Conference on Computer and Communications Security. pp. 1739--1752 (2022)

\bibitem{kotla2007zyzzyva}
Kotla, R., Alvisi, L., Dahlin, M., Clement, A., Wong, E.: Zyzzyva: speculative byzantine fault tolerance. In: Proceedings of twenty-first ACM SIGOPS symposium on Operating systems principles. pp. 45--58 (2007)

\bibitem{kwon2014tendermint}
Kwon, J.: Tendermint: Consensus without mining. Draft v. 0.6, fall  \textbf{1}(11),  1--11 (2014)

\bibitem{lamport1982byzantine}
Lamport, L., Shostak, R., Pease, M.: The byzantine generals problem. vol.~4, pp. 382--401 (1982)

\bibitem{liu2016xft}
Liu, S., Viotti, P., Cachin, C., Qu{\'e}ma, V., Vukoli{\'c}, M.: $\{$XFT$\}$: Practical fault tolerance beyond crashes. In: 12th USENIX Symposium on Operating Systems Design and Implementation (OSDI 16). pp. 485--500 (2016)

\bibitem{milosevic2009unifying}
Milosevic, Z., Hutle, M., Schiper, A.: Unifying byzantine consensus algorithms with weak interactive consistency. In: Principles of Distributed Systems: 13th International Conference, OPODIS 2009, N{\^\i}mes, France, December 15-18, 2009. Proceedings 13. pp. 300--314. Springer (2009)

\bibitem{shoup2024theoretical}
Shoup, V.: A theoretical take on a practical consensus protocol. Cryptology ePrint Archive  (2024)

\bibitem{stathakopoulou2019mir}
Stathakopoulou, C., David, T., Vukolic, M.: Mir-bft: High-throughput bft for blockchains. arXiv preprint arXiv:1906.05552  \textbf{92} (2019)

\bibitem{veronese2009minimal}
Veronese, G.S., Correia, M., Bessani, A., Chung, L., Verissimo, P.: Minimal byzantine fault tolerance: Algorithm and evaluation. University of Lisbon, DI/FCUL TR pp. 09--15 (2009)

\bibitem{yin2019hotstuff}
Yin, M., Malkhi, D., Reiter, M.K., Gueta, G.G., Abraham, I.: Hotstuff: Bft consensus with linearity and responsiveness. In: Proceedings of the 2019 ACM symposium on principles of distributed computing. pp. 347--356 (2019)

\end{thebibliography}

\end{document}